\declaretheorem[name=Question, style=remark, sibling=theorem]{question}
\title{Local certification of MSO properties for bounded treedepth graphs}
\author{Nicolas Bousquet}{Univ. Lyon, Université Lyon 1, LIRIS UMR CNRS 5205, F-69621, Lyon, France}{nicolas.bousquet@univ-lyon1.fr}{https://orcid.org/0000-0003-0170-0503}{}
\author{Laurent Feuilloley}{Univ. Lyon, Université Lyon 1, LIRIS UMR CNRS 5205, F-69621, Lyon, France}{laurent.feuilloley@univ-lyon1.fr}{https://orcid.org/0000-0002-3994-0898}{}
\author{Théo Pierron}{Univ. Lyon, Université Lyon 1, LIRIS UMR CNRS 5205, F-69621, Lyon, France}{theo.pierron@univ-lyon1.fr}{https://orcid.org/0000-0002-5586-5613}{}
\authorrunning{N. Bousquet, L. Feuilloley and T. Pierron}
\keywords{Distributed decision, local certification, treedepth, MSO logic, graph properties, proof-labeling schemes, locally checkable proofs}
\begin{document}

\maketitle
\thispagestyle{empty}

\begin{abstract}
The graph model checking problem consists in testing whether an input graph satisfies a given logical formula. In this paper, we study this problem in a distributed setting, namely local certification. The goal is to assign labels to the nodes of a network to certify that some given property is satisfied, in such a way that the labels can be checked locally.

We first investigate which properties can be locally certified with small certificates. Not surprisingly, this is almost never the case, except for not very expressive logic fragments. Following the steps of Courcelle-Grohe, we then look for meta-theorems explaining what happens when we parameterize the problem by some standard measures of how simple the graph classes are. In that direction, our main result states that any MSO formula can be locally certified on graphs with bounded treedepth with a logarithmic number of bits per node, which is the golden standard in certification.
\end{abstract}

\bigskip

\thispagestyle{empty}
\clearpage
\pagenumbering{arabic} 

\section{Introduction}
\subsection{Local certification}

In this work, we are interested in the locality of graph properties. For example, consider the property ``the graph has maximum degree three''. 
We say that this property can be checked locally, because if every node checks that it has at most three neighbors (which is a local verification), then the graph satisfies the property (which is a global statement).
Most graph properties of interest are not local. 
For example, to decide whether a graph is acyclic, or planar, the vertices would have to look further in the graph. Some properties can be seen as local or not, depending on the exact definition. For example, having a diameter at most $3$, is a property that can be checked locally if we consider that looking at distance $3$ is local, but not if we insist on inspecting only the neighbors of a vertex. 

In general, checking whether a network has some given property is an important primitive for graph algorithms, as it allows to choose the most efficient algorithms for that network. For distributed graph algorithms in particular, locality is a guarantee of efficiency. Therefore, properties that can be checked locally are especially useful.  
As said, most properties are not locally checkable, and we would like to have a mechanism to circumvent this shortcoming.

Local certification is such a mechanism, in the sense  that it allows to check any graph property locally. 
For a given property, a local certification is described by a certificate assignment and a verification algorithm: each node receives a certificate, reads the certificates of its neighbors and then runs a verification algorithm~\cite{Feuilloley21}. This algorithm decides whether the node accepts or rejects the certification. 
If the graph satisfies the property, then there should be a certificate assignment such that all the nodes accept. Otherwise, in each assignment, there must be at least one node that rejects. Note that since we consider a distributed setting, we work under the standard assumptions that the graphs we consider are connected, loopless and non-empty.

For concreteness, let us describe a local certification of acyclicity~\cite{AfekKY90}. 
On an acyclic graph, we assign the certificates in the following way: we choose a node to root the tree, and then each certificate consists of the distance to the root.
Given those certificates, the vertices can check that the distances are sound. 
The key point is that if the graph has a cycle, then in every cycle, for any certificate assignment, there will be an inconsistency in the distances, and then at least one vertex will reject.

Note that the definition of local certification is similar to the certificate definition of the complexity class NP.
And indeed, it can be seen as a form of non-determinism for distributed decision~\cite{GoosS16, KormanKP10, FraigniaudKP13}. 
Actually, local certification does not originate from the theoretical study of local properties and non-determinism. Instead, it appeared in the context of self-stabilization in distributed computing, where an algorithm, in addition to computing a solution to a combinatorial problem (\emph{e.g.} building a spanning tree), would also compute a certification of it, in order to check its correctness locally, in case of faults (for example a change of pointer in a tree) \cite{KormanKP10, AfekKY90, AwerbuchPVD94}.

Any property can be certified with $O(n^2)$ bits certificates, where $n$ is the total number of vertices. 
This is because one can just give the full description of the graph to every node, which can then check that the property holds in the description, that the description is correct locally, and identical between neighbors. 
This $O(n^2)$ size is very large and not practical, thus the main goal of the study of local certification, is to minimize the size of the certificate, expressed as a number of bits per vertex, as a function of $n$.
In addition to the optimization motivation originating from distributed self-stabilizing algorithms, establishing the minimum certificate size also has a more theoretical appeal. Indeed, the optimal certification size of a property can be seen as a measure of its locality: the smaller the labels, the less global information we need to provide to allow local verification, the more local the property.

For acyclicity, the optimal certificate size is $\Theta(\log n)$ bits. In the certification described above, the distances are between $0$ and $n$, thus can be encoded in $O(\log n)$ bits, and there is a matching lower bound~\cite{KormanKP10, GoosS16}.
For most properties, one cannot hope to go below this $\Omega(\log n)$ bits per node. This size is needed for distances, but also to store identifiers, that are necessary to break symmetry for some properties (we consider that each vertex has a unique identifier represented by an integer from $[1,n^k]$ where $k$ is a fixed constant). As a consequence, $\Theta(\log n)$ has been identified as the standard for compact (or efficient) certification. 
Recently, planarity and more generally embeddability on bounded-genus surfaces, and $H$-minor-freeness for specific $H$, have been proved to have such compact certifications~\cite{FeuilloleyFMRRT21, FeuilloleyFMRRT21-b, EsperetL21, BousquetFP21}.

Unfortunately, not every property has a compact certification.
For example, having a non-trivial automorphism or not being 3-colorable are properties that basically cannot be certified with less than $\Omega(n^2)$ bits~\cite{GoosS16}.
Even surprisingly simple properties, such as having diameter at most $2$, cannot be certified with a sublinear number of bits per vertex (up to logarithmic factors), if we only allow the local verification to be at distance~one~\cite{Censor-HillelPP20}.

\subsection{Graph model checking}

As mentioned above, many specific graph properties such as planarity or small-diameter have been studied in the context of local certification. 
In this paper, we take a more systematic approach, by considering classes of graph properties. 
We are interested in establishing theorems of the form: ``all the properties that can be expressed in some formalism $X$ have a compact certification''. 

Considering such an abstraction already proved its worth in sequential computing: instead of proving results for specific graph properties, one can extract the common ideas behind their proof and obtain metatheorems. Those theorems state links between some classes of logical sentences\footnote{Some readers might be more familiar with the word ``formula'' than with ``sentence''. In our context, ``sentence'' is more correct, but for the introduction, one could harmlessly replace the latter by the former.} and algorithmic properties such as fixed-parameter tractability~\cite{Courcelle90,Grohe17,Dvorak10} or approximability~\cite{Papadimitriou91}, see for example the survey~\cite{kreutzer11}. The stereotypical example is Courcelle's result~\cite{Courcelle90}, which shows that model checking of MSO sentences is FPT when parameterized by the treewidth of the input graph. Observe that for these results, there is a tradeoff between the expressiveness of the fragment and the generality of the conclusion. For example, when considering first order logic (i.e. forbidding quantifications on sets), the model checking becomes FPT on nowhere dense classes~\cite{Grohe17}.

In this paper, we will consider properties that can be expressed by sentences from monadic second order logic (MSO), just like in Courcelle's theorem. 
These are formed from atomic predicates that test equality or adjacency of vertices (recall that our graphs are loopless hence equality implies non-adjacency) and allowing boolean operations and quantifications on vertices, edges, and sets of vertices or edges. Now, certifying a given property consists in certifying that a graph is a positive instance of the so-called graph model checking problem for the corresponding sentence $\varphi$: 
\begin{itemize} 
    \item Input: A graph $G$
    \item Output: Yes, if and only if, $G$ satisfies $\varphi$.
\end{itemize}

\subsection{The generic case}

Let us first discuss what such a metatheorem must look like when we do not restrict the class of graphs we consider. As we already mentioned, diameter $2$ graphs cannot be certified with sublinear certificates~\cite{Censor-HillelPP20}. 
This can be expressed with the following sentence:
$$\forall x\forall y (x=y\vee x-y \vee \exists z (x-z \land z-y))$$
This sentence is very simple: it is a first order sentence (a special case of MSO), it has quantifier depth three and there is only one quantifier alternation (two standard complexity measures for FO sentences which respectively counts the maximum number of nested quantifiers and the number of alternations between blocks of existential and universal quantifiers). 

Therefore, there exists very simple first order logic sentences which cannot be certified efficiently. The only possible way to simplify the sentences would consist in only having at most two nested quantifiers or not authorizing alternation of quantifiers. In these cases, the following holds:

\begin{restatable}{lemma}{lemFragments}
\label{lem:existsFO}
FO sentences with quantifier depth at most 2 can be certified with $O(\log n)$ bits. \\
Existential FO sentences (\emph{i.e.} whose prenex normal form has only existential quantifiers) can be certified with $O(\log n)$ bits.
\end{restatable}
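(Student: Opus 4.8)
The plan is to handle the two fragments separately, in each case reducing the certification to a \emph{constant} number of elementary gadgets, each certifiable with $O(\log n)$ bits, and then combining them, using that $O(\log n)$-certifiability is preserved under a fixed number of conjunctions and disjunctions. For the first statement, I would first put the sentence in a normal form. Since boolean connectives do not increase quantifier depth, any sentence of quantifier depth at most $2$ is a boolean combination of formulas $Q_1 x\,\psi(x)$ with $Q_1\in\{\exists,\forall\}$ and $\psi$ of quantifier depth at most $1$ with the single free variable $x$; in turn $\psi(x)$ is a boolean combination of atoms over $\{x\}$ (all trivial, since the graph is loopless) and of formulas $Q_2 y\,\theta(x,y)$ with $\theta$ quantifier-free. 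The key observation is that $\theta(x,y)$ depends only on the \emph{type} of the pair $(x,y)$: whether $x=y$, or $x$ and $y$ are adjacent, or $x$ and $y$ are distinct and non-adjacent. Resolving the inner quantifier shows that $Q_2 y\,\theta(x,y)$ is equivalent to a boolean function of the two bits $b_1(x)=[\deg(x)\ge 1]$ and $b_2(x)=[\deg(x)\le n-2]$ (whether $x$ has a neighbor, resp.\ a non-neighbor distinct from itself). Hence $\psi(x)$ is a predicate $R(x)$ depending only on $(b_1(x),b_2(x))$, and the whole sentence becomes a boolean combination of statements $\forall x\,R(x)$ and $\exists x\,R(x)$.

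It then suffices to certify each such statement. I would certify the value of $n$ and a spanning tree with $O(\log n)$ bits (as for acyclicity), so that every vertex knows $n$ and can evaluate $b_1,b_2$, hence $R$, on itself. A statement $\forall x\,R(x)$ is then checked purely locally. A statement $\exists x\,R(x)$ is certified by designating a witness $w$ with $R(w)$ true and adding a BFS-layering rooted at $w$ (distance to $w$): the unique vertex at distance $0$ verifies that $R$ holds for it, and connectivity forces such a vertex to exist. Pushing negations inward only swaps $\exists$ and $\forall$ and replaces $R$ by $\neg R$ (still a local-type predicate), so the sentence is a monotone boolean combination of such atoms. Conjunctions are handled by concatenating certificates and accepting only if all sub-verifiers accept, and disjunctions by writing a single constant-size index on every vertex indicating which true disjunct is certified and checking its consistency across every edge. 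As the number of atoms is a constant depending only on the fixed formula, the total size stays $O(\log n)$.

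For the second statement, write an existential sentence in prenex form as $\exists x_1\cdots\exists x_k\,\alpha(x_1,\dots,x_k)$ with $\alpha$ quantifier-free and $k=O(1)$. If it holds, fix a witnessing tuple $(a_1,\dots,a_k)$. I would certify it by providing a spanning tree, writing on every vertex the $k$-tuple $W=(\mathrm{id}(a_1),\dots,\mathrm{id}(a_k))$ together with the constant-size truth assignment $\tau$ that $\alpha$ gives to its atoms $x_i=x_j$ and $x_i-x_j$, and, for each $i$, a BFS-layering locating the owner of role $i$, i.e.\ the vertex whose identifier is the $i$-th entry of $W$. Each vertex checks: global agreement of $W$ and $\tau$ across edges; that $\alpha$ evaluates to true under $\tau$; and that every equality atom of $\tau$ agrees with the corresponding comparison of identifiers in $W$, which any vertex can perform from $W$ alone. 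The vertex owning role $i$ additionally checks, for every $j$, that the adjacency atom $x_i-x_j$ of $\tau$ matches whether it really has a neighbor with identifier equal to the $j$-th entry of $W$. Since $k$ is constant, all certificates have size $O(\log n)$.

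The main obstacle, and the reason a mere witness pointer does not suffice, is exactly the verification of the adjacency atoms: adjacency of two \emph{named} but possibly far-apart vertices is a property that only their endpoints can test locally, so one must both guarantee that each claimed witness identifier is actually realized by some vertex of the graph, which is what the per-role BFS-layerings enforce, and route the local checks of the owners into a single globally consistent assignment $\tau$ on which the whole network agrees. Granting these, soundness follows: the layerings produce genuine vertices $b_1,\dots,b_k$, the equality checks ensure $\tau$ reflects their coincidences, the owners ensure $\tau$ reflects their true adjacencies, and $\alpha$ is true under $\tau$, so $\alpha(b_1,\dots,b_k)$ holds and the sentence is satisfied. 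Completeness is immediate by reading $W$, $\tau$, and the layerings off a genuine witnessing tuple.
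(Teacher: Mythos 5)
Your proposal is correct and takes essentially the same route as the paper: for quantifier depth~2 you use the same normal form and then reduce to degree-testable properties certified by counting $n$ via a spanning tree (your two bits $b_1(x),b_2(x)$ are just a systematic repackaging of the paper's explicit enumeration --- at most one vertex, clique, dominating vertex --- and the witness layering for $\exists x\,R(x)$ matches the paper's rooted spanning tree). Likewise, your existential-sentence scheme (identifiers of the $k$ witnesses broadcast everywhere, a spanning tree per witness, and the atom assignment $\tau$, which is exactly the paper's $k\times k$ adjacency matrix, verified row-by-row by the witnesses) coincides with the paper's Lemma~\ref{lem:FOext}, with your write-up merely making the owners' adjacency checks more explicit.
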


Section~\ref{sec:fragment} is devoted to the easy proof of this result. We sketch it here. For the FO sentences with quantifiers of depth at most $2$, we can prove that the only interesting properties that can be expressed are basically a vertex being dominant (adjacent to all other vertices) or the graph being a clique. As it will be described in  Subsection~\ref{subsec:basic-certif}, we can easily certify the number of vertices in the graph. Given this information, a given vertex (resp. all vertices) can check that its degree is (resp. their degree are) equal to $n-1$, which is sufficient for these properties. For existential FO sentences, the certification boils down to having a spanning tree pointing to each vertex used in a variable, and certifying the structure of the subgraph induced by these vertices (by simply describing it completely in each certificate, which takes $O(k^2+k\log n)$ bits).

One can notice that we do not mention in this lemma the formulas which only contain universal ($\forall$) quantifiers. Actually, certifying these properties seem to be much harder than existential formulas. Indeed, it is, for instance, much easier to exhibit a triangle if it exists (existential FO: $\exists x\exists y\exists z (x-y\wedge y-z\wedge x-z)$) than proving that the graph is triangle-free since we can pinpoint the vertices mapped to the existentially quantified variables. We actually conjecture that the answer to the following question is negative:

\begin{question}
Is it possible to certify with $O(\log n)$ bits that a graph is triangle-free?
\end{question}

The main reason for our conjecture is that this question has a similar flavor as the triangle detection problem in the CONGEST model. 
The problem is basically the same, in the sense that at the end at least one node should have a special output if and only if there is a triangle. But the model is different: in the CONGEST model the vertices exchange messages of size $O(\log n)$, and we are interested in the number of rounds of communication before the vertices output. 
Despite a lot of efforts on this problem and the design of specific tools (so-called expander decompositions), the best algorithm known uses a large number of rounds, $\Theta(n^{1/3})$, and is randomized~\cite{ChangPSZ21}. The suspected hardness of this problem suggests that the certification problem could also be difficult (that is, could require large certificates), because insights and concrete lower bounds have been successfully transferred from the CONGEST model to local certification in the past~\cite{Censor-HillelPP20, FeuilloleyFHPP21}.  

\subsection{Restricting the class of graphs}

The previous subsection shows that there is basically not much hope for better results than Lemma~\ref{lem:existsFO} in general. Following the steps of other well-known metatheorems, the natural approach is then to restrict the class of graphs we consider. A first direction would be to consider graphs with bounded treewidth, as an attempt to obtain a Courcelle-like theorem for local certification. In other words, is it possible to certify any MSO formula on bounded treewidth graphs? We unfortunately did not find the answer to this question. Even worse, it is an open problem to determine if we can certify that the graph itself has treewidth at most $k$. It is problematic since, even if we can certify a property on bounded treewidth graphs, we would like to be sure that we can efficiently check if we are in such a graph class.

\begin{question}\label{q:treewidth}
Can we certify with $O(\log n)$ bits that a graph $G$ has treewidth at most $t$?
\end{question}

The main issue with Question~\ref{q:treewidth} is that, in order to certify that a graph has treewidth at most~$t$, one has to certify a distribution of the vertices in \emph{bags} of size at most $t+1$. However, there is no a priori useful bound on how many bags there are, nor how far in the graph can be the vertices of the same bag. Thus, we might need to propagate some important information at a large distance, which becomes an issue if we have too many bags. Even for some restrictions of treewidth which are apparently simpler, such as pathwidth, certifying them with $O(\log n)$ bits is still open.

Motivated by the unavoidable non-elementary dependence in the formula in Courcelle's theorem~\cite{Frick04}, Gajarsk\'y and Hlin\v en\'y~\cite{GajarskyH15} designed a linear-time FPT algorithm for MSO-model checking with elementary dependency in the sentence, by paying the price of considering a smaller class of graphs, namely graphs of bounded treedepth. Their result is essentially the best possible as shown soon after in~\cite{Lampis13}. 

For our purposes, the bounded treedepth perspective seems definitely more promising. Indeed, as a first clue, we prove that one can locally check that a graph has treedepth at most $t$ with logarithmic-size certificates. 

\begin{restatable}{theorem}{thmCertifyTreedepth}
\label{thm:certify-treedepth}
If $G$ is a graph of treedepth at most~$t$, then we can certify that $G$ has treedepth at most~$t$ with $O(t \log n)$ bits.
\end{restatable}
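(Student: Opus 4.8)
The plan is to certify treedepth through its standard elimination-forest characterization: a connected graph $G$ has $\mathrm{td}(G)\le t$ if and only if there is a rooted tree $F$ on the vertex set of $G$, of height at most $t$, such that every edge of $G$ joins a vertex to one of its ancestors in $F$. First I would use the prover to encode such a tree $F$ (which exists by assumption) by giving to each vertex $v$ its \emph{ancestor list}: the sequence of identifiers read along the path of $F$ from the root down to $v$. Since $F$ has height at most $t$ and each identifier fits in $O(\log n)$ bits, every certificate has size $O(t\log n)$, as required. Completeness is then immediate: with the honest lists, every vertex sees that its own list has length at most $t$ and ends with its own identifier, and every edge, being an ancestor--descendant pair of $F$, is witnessed by two lists one of which is a prefix of the other.

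The verification is purely local: each vertex $v$ checks that (i) its list has length at most $t$ and ends with $\mathrm{id}(v)$, and (ii) for every neighbour $u$, the lists of $u$ and $v$ are \emph{prefix-compatible}, that is, one is a prefix of the other. Both tests only inspect $v$'s certificate and those of its neighbours, so the scheme is local.

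The delicate point, and the one I expect to be the main obstacle, is soundness. The natural worry is that the lists are globally inconsistent: because a vertex and its claimed ancestors need not be adjacent in $G$ (the tree $F$ is \emph{not} a subgraph of $G$ --- already for $C_4$ the parent of a vertex may be a non-neighbour), one cannot propagate consistency along the tree to force the lists to describe a single genuine tree. The key idea to circumvent this is to \emph{not} try to recover the prover's intended tree at all, but to show that \emph{any} family of lists passing the two checks already encodes some valid height-at-most-$t$ decomposition. Concretely, I would consider the set of all lists carried by the vertices; distinct vertices carry distinct lists because of the terminal-identifier check, so $v\mapsto(\text{list of }v)$ is injective. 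Ordering these lists together with their prefixes by the prefix relation yields a trie; contracting every prefix not carried by any vertex gives a rooted forest $T'$ whose nodes are exactly the vertices of $G$, in which $u$ is an ancestor of $v$ precisely when the list of $u$ is a proper prefix of the list of $v$.

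It then remains to verify that $T'$ is a valid treedepth decomposition of height at most $t$. The ancestors of $v$ in $T'$ correspond to proper prefixes of its list, so its depth is at most the length of that list, which is $\le t$; this gives the height bound. For the edge condition, take any edge $uv$: by check (ii) one list, say that of $u$, is a prefix of that of $v$, and it is a \emph{proper} prefix since the two lists differ in their last entry; hence $u$ is an ancestor of $v$ in $T'$, exactly as required. Finally, prefix-compatibility along edges forces adjacent lists, and therefore all lists along any path of $G$, to agree on their first entry, so by connectivity $T'$ is a single tree rather than a forest. This exhibits the required decomposition and proves $\mathrm{td}(G)\le t$, establishing soundness and hence Theorem~\ref{thm:certify-treedepth}.
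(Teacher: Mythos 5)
Your proof is correct, but it takes a genuinely lighter route than the paper. The paper also starts from ancestor lists, but it does not trust them on their own: it additionally certifies, for every vertex $v$ and every ancestor of $v$, a spanning tree of the subtree $G_v$ pointing to an \emph{exit vertex} (a vertex of $G_v$ adjacent to $v$'s parent), which requires first establishing that a \emph{coherent} model exists (Lemma~\ref{lem:coherent1}); soundness is then proved via Claim~\ref{clm:one-step}, which uses these spanning trees to show that the tail of each list is actually carried by a real vertex of the graph, so that the prover's intended elimination tree can be reconstructed from parent pointers. Your key observation makes all of that machinery unnecessary for this theorem: you never try to recover the prover's tree, but note that \emph{any} family of distinct lists that end in the owner's identifier and are pairwise prefix-comparable across edges is itself a treedepth witness --- the trie of carried lists, with ancestor relation ``is a proper prefix of,'' is a forest on $V(G)$ of the right height in which every edge of $G$ joins an ancestor--descendant pair, and that is exactly what the elimination-forest definition of treedepth demands (indeed the forest suffices; your extra connectivity argument for a single tree is sound but not even needed under the paper's definition). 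In effect you prove that $\mathrm{td}(G)\le t$ \emph{if and only if} such a list-labeling exists, making the certificate self-validating. Two small remarks: first, under the paper's convention a vertex at depth $t$ has a list of $t+1$ identifiers, so your length check should read ``at most $t+1$'' (a harmless convention off-by-one that does not affect the $O(t\log n)$ bound); second, the paper's heavier scheme is not wasted effort in context --- certifying the actual coherent model with its exit-vertex spanning trees is precisely what Section~\ref{sec:kernel-certif} reuses to propagate type information when certifying the kernel, whereas your scheme certifies only the treedepth bound itself.
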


The proof of this result, that will appear in Section~\ref{sec:treedepth-certif}, consists in showing that we can efficiently simulate a tree decomposition of the input graph, and heavily uses that spanning trees and paths can be easily certified. The key point of the proof is the fact that the set of ancestors of a node $v$ is a separator between $v$ and the rest of the graph. This small separator allows to only have to deal with the subtrees plus a bounded size separator, which makes the certification easier.

The next problem in line is then MSO-model checking for graphs of bounded treedepth. In such classes, it happens that MSO and FO have the same expressive power~\cite{ElberfeldGT16}: for every $t$ and every MSO sentence, there exists a FO sentence satisfied by the same graphs of treedepth at most $t$. The second part of this article consists in showing our main result, summarized in the following statement.

\begin{theorem}
\label{thm:main}
Every FO (and hence MSO) sentence $\varphi$ can be locally certified with $O(t\log n+ f(t,\varphi))$-bit certificates on graphs of treedepth at most $t$. 
\end{theorem}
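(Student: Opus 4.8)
The plan is to certify the equivalent FO sentence and to compute, in a distributed and certified way, Hintikka-style quantifier types along a certified elimination forest. First I would use the result of Elberfeld, Grohe and Tantau~\cite{ElberfeldGT16}: on graphs of treedepth at most $t$, every MSO sentence is equivalent to some FO sentence $\psi$, so it suffices to certify $\psi$. Let $q$ be the quantifier rank of $\psi$. Second I would invoke Theorem~\ref{thm:certify-treedepth} to certify an elimination forest $F$ of height at most $t$; since $G$ is connected, $F$ is a single rooted tree. This certification equips each vertex $v$ with its ancestor path $A_v$ (the at most $t$ vertices on the way to the root), together with the adjacency relation restricted to $A_v\cup\{v\}$, for a cost of $O(t\log n + t^2)$ bits, and it guarantees the crucial property that $A_v$ separates the subtree $T_v$ from the rest of $G$.

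The heart of the argument is a bottom-up type computation. For a vertex $v$, consider the boundaried structure consisting of $G[T_v]$ enriched with, for each of the at most $t$ ancestors $a_i\in A_v$, a unary predicate marking the vertices of $T_v$ adjacent to $a_i$. Let $\tau_v$ be the FO $q$-type of this structure. Since any $G$-edge joins an ancestor to a descendant, there is no edge between the subtrees of two distinct children of $v$, so $T_v$ is the disjoint union of $\{v\}$ and the subtrees $T_c$ of the children $c$ of $v$, the interaction with $v$ and $A_v$ being entirely recorded by the unary predicates. The Feferman–Vaught composition theorem for FO over disjoint unions then yields that $\tau_v$ is a function of $v$'s adjacency pattern to $A_v$ and of the multiset of children types $\tau_c$, with threshold counting (up to $q$) being sufficient. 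In particular the number of possible types is bounded by a constant $f(t,\psi)$, each type is encodable in $O(f(t,\psi))$ bits, and at the root $r$ the type $\tau_r$ (whose boundary is empty) determines whether $G\models\psi$.

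Accordingly, I would add to the certificate of each $v$ its claimed type $\tau_v$ together with the threshold-counted multiset of the types claimed by its children, costing only $O(f(t,\psi))$ additional bits. Verification consists of three local checks: the forest and separator structure via Theorem~\ref{thm:certify-treedepth}; the composition identity, namely that $\tau_v$ is indeed obtained from $v$'s adjacencies to $A_v$ and the claimed children multiset through the (constant-size) transition function of the previous paragraph; and, at $r$, that $r$ is the root and that $\psi$ belongs to $\tau_r$. Correctness of the root type then propagates soundness downward: if some vertex lies about its type, the first lie along a root-to-leaf branch violates the composition identity at its parent.

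The main obstacle is precisely this last consistency check, because a child $c$ of $v$ in $F$ need not be a neighbour of $v$ in $G$, so $v$ cannot in general read $\tau_c$ directly from an adjacent certificate. I expect to resolve this by certifying, on top of $F$, a spanning tree of $G$ that lets each subtree report its type upward to the vertex where it is aggregated; the separator property of $A_v$ guarantees that every child subtree attaches to $\{v\}\cup A_v$, which keeps the routing consistent and the extra information bounded. Making this aggregation verifiable with only $O(t\log n + f(t,\psi))$ bits per vertex, rather than naively propagating all children identities, is the delicate point and the step I would spend most effort on.
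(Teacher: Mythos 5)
Your overall architecture is sound and is, at heart, the same mechanism as the paper's, packaged differently. The paper does not compute Hintikka types via Feferman--Vaught; instead it constructs an explicit kernel, the \emph{$k$-reduced graph} $H$, obtained by repeatedly pruning, bottom-up in the elimination tree, any subtree whose root has more than $k$ siblings of the same \emph{end type} (where a type is the ancestor-vector-labelled subtree, so that after pruning the number of types at each depth is bounded by a function of $k$ and $t$ only, as in Lemma~\ref{lem:kreduced_size}). That $G\simeq_k H$ is proved by an Ehrenfeucht--Fra\"iss\'e game rather than by composition theorems, and the final evaluation step is different from yours: since $|H|$ is bounded by a function of $k$ and $t$, the prover writes the \emph{entire adjacency matrix of $H$} into every certificate, and each vertex evaluates $\varphi$ on $H$ independently, instead of checking membership of $\psi$ in a root type. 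Your threshold counting up to the quantifier rank $q$ plays exactly the role of the paper's cap at $k$; both routes are viable, and yours is arguably more economical in that it never materializes the kernel. Note also that the reduction from MSO to FO on bounded treedepth is used identically in both arguments.

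The step you flag as unresolved --- how a vertex $v$ verifies the types claimed by its children in the elimination tree, which need not be $G$-neighbours of $v$ --- is a genuine gap as your proposal stands, and it is precisely where the paper's earlier machinery earns its keep. The fix has two ingredients. First, the elimination tree is taken to be a \emph{coherent} model (Lemma~\ref{lem:coherent1}): every child subtree $G_w$ of $v$ contains an \emph{exit vertex} adjacent to $v$ in $G$, and Theorem~\ref{thm:certify-treedepth} already certifies, for each $w$, a spanning tree of $G_w$ rooted at that exit vertex. Second, each vertex's certificate carries not just its own type but the \emph{list of end types of all its ancestors} (at most $t$ entries, each of size bounded in $k$ and $t$), together with the pruning flags. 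Consequently the exit vertex of $G_w$, which \emph{is} adjacent to $v$, carries $w$'s claimed type, and any attempt by the prover to change the type attributed to $w$ somewhere between $w$ and its exit vertex creates an inconsistency between adjacent certificates along the certified spanning tree of $G_w$, so some vertex on that path rejects. With this, $v$ can locally run your composition check (the paper's analogue is Remark~\ref{rmk:deftype} plus the exactly-$k$-siblings condition of Lemma~\ref{lem:sametype}), and the certificate size stays at $O(t\log n + f(t,\varphi))$ since the ancestor-type lists cost $t$ times a quantity independent of $n$. So your instinct about routing via spanning trees and the separator property was correct; the missing observation is that coherence pins the reporting point of each child subtree to an actual $G$-neighbour of the parent, and that replicating types along ancestor lists makes tampering locally detectable without propagating any identities.
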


The proof of this result goes through two steps. The first one is a kernelization result for FO-model checking for graphs of bounded treedepth (Section~\ref{sec:kernel}). 

\begin{theorem}
\label{thm:kerneltd}
For every integer $k$ and every graph $G$, there exists a graph $H$, called the \emph{kernel}, whose size is bounded by a computable function of $k$ such that $G$ and $H$ satisfy the same set of FO sentences with at most $k$ quantifiers. 
\end{theorem}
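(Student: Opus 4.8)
The plan is to build the kernel $H$ as a well-chosen induced subgraph of $G$, guided by a treedepth decomposition of $G$, and to verify its correctness through Ehrenfeucht--Fra\"iss\'e (EF) games. Recall that two (pointed) structures are \emph{$k$-equivalent}, written $\equiv_k$, if Duplicator wins the $k$-round EF game on them; this is the game-theoretic characterization of satisfying the same FO sentences of quantifier rank at most $k$. Since a sentence using at most $k$ quantifiers has quantifier rank at most $k$, it suffices to produce an $H$ with $G \equiv_k H$. I would fix a treedepth decomposition of $G$, i.e. a rooted forest $F$ of height at most $t$ on $V(G)$ such that every edge of $G$ joins a vertex to one of its $F$-ancestors. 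The crucial structural feature (the same separator property used for Theorem~\ref{thm:certify-treedepth}) is that for any node $v$, the subtree $T_v$ of descendants of $v$ attaches to the rest of $G$ only through the set $\mathrm{anc}(v)$ of its ancestors, which has size at most $t$; moreover, subtrees hanging from incomparable nodes are pairwise non-adjacent.

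First I would set up two compositional lemmas that make EF-types behave well along $F$, both proved by standard strategy-copying arguments in the game. \textbf{Substitution.} If a structure is glued from a common part $C$ and a piece $D$ sharing only the tuple $\bar a \subseteq C$, and $D$ is replaced by a piece $D'$ with $(D,\bar a)\equiv_k (D',\bar a)$, then the resulting structure is $k$-equivalent to the original relative to any fixed pebbling inside $C$. \textbf{Pruning.} If $C$ is glued to pairwise non-adjacent pieces $D_1,\dots,D_m$ that attach only to $\bar a\subseteq C$ and all share the same pointed type $(D_i,\bar a)$, then the $k$-type of the glued structure depends only on $\min(m,k)$; hence deleting pieces down to $k$ copies of each type preserves $\equiv_k$. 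The idea in both cases is that in $k$ rounds Spoiler visits at most $k$ of the pieces, so Duplicator can always answer inside a fresh, identically-typed piece on the other side.

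The kernel is then produced by processing $F$ bottom-up. Inductively I would reduce each subtree $T_v$ to an induced subtree $T_v'$ with $(T_v',\partial(v))\equiv_k (T_v,\partial(v))$, where $\partial(v)=\mathrm{anc}(v)\setminus\{v\}$ is the boundary of $T_v$. At a node $v$ whose children $c_1,\dots,c_m$ have already been reduced, I first invoke Substitution to see that replacing each child subtree by its reduced version preserves the type of $(T_v,\partial(v))$; I then group the children by the pointed $k$-type of $(T_{c_i}',\mathrm{anc}(v))$ and, using Pruning with $C=\mathrm{anc}(v)$, keep at most $k$ children of each type, discarding the subtrees of the rest. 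Because the number of pointed $k$-types with at most $t$ distinguished elements over the graph signature is bounded by some computable $M=M(k,t)$, each retained node keeps at most $kM$ children; with height at most $t$ this bounds $|H|$ by a computable function of $k$ and $t$ (with $t$ fixed by the graph class). Applying the invariant at the roots of $F$ yields $G\equiv_k H$.

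The main obstacle I expect is the careful execution of the two EF-game lemmas, especially the bookkeeping for Pruning when the shared boundary $\mathrm{anc}(v)$ --- in particular the apex $v$ itself --- is not initially pebbled: Duplicator must maintain, round by round, a matching between the visited pieces on the two sides together with consistency on $\mathrm{anc}(v)$, and argue that $k$ rounds never exhaust the supply of same-type pieces when at least $k$ are present. Getting the threshold right, and choosing the type refinement so that ``same type'' stays meaningful under later moves higher in the forest, is the delicate part; the remainder is a routine induction on the height of $F$.
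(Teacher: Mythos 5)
Your proposal is correct, and at the skeleton level it coincides with the paper's proof: fix a treedepth decomposition, prune each node down to at most $k$ child subtrees per ``type'', and prove $G\simeq_k H$ via an Ehrenfeucht--Fra\"iss\'e argument. But the notion of type and the architecture of the correctness proof are genuinely different. The paper's type of a subtree is its \emph{exact isomorphism type} as a rooted tree whose nodes are labeled by ancestor vectors (adjacency to each ancestor); it prunes deepest-first while updating types, and proves equivalence with a single global EF strategy maintaining the invariant that the ancestor-closures of the pebbled sets, $\mathcal{T}_{\{x_1,\ldots,x_i\}}$ and $\mathcal{T}'_{\{y_1,\ldots,y_i\}}$, are end-type-isomorphic; the availability count $\min(r-1,i)\leq\min(r,k)-1$ in that proof is precisely the ``$k$ rounds cannot exhaust $k$ same-type pieces'' bookkeeping you flag as delicate, so your threshold of $k$ copies per type is sound. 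You instead group children by the pointed rank-$k$ \emph{logical} type of $(T'_{c_i},\mathrm{anc}(v))$ and derive correctness modularly from two standard composition lemmas (substitution and pruning); since $k$-equivalence is coarser than labeled isomorphism, your kernel is never larger than the paper's, and your size bound via the computable number $M(k,t)$ of pointed rank-$k$ types with at most $t$ parameters plays the role of the paper's tower $f_d(k,t)$ from Lemma~\ref{lem:kreduced_size}. What the paper's finer, purely combinatorial types buy is the sequel: in Section~\ref{sec:kernel-certif} the kernel must be \emph{locally certified}, and Remark~\ref{rmk:deftype} --- a vertex's end type is determined by its ancestor adjacencies together with the multiset of its children's end types --- makes the type of a node an immediately checkable local object; with logical types one would first have to prove (essentially via your own composition lemmas) that rank-$k$ pointed types compose the same way before a node could verify its announced type, so your route is more standard model-theoretically but the paper's is better adapted to its certification goal.

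One remark on scope: as literally stated, the theorem quantifies over all graphs with a size bound computable in $k$ alone. Both the paper's actual construction and yours prove the intended statement, namely a bound $f(k,t)$ for graphs of treedepth at most $t$; the all-graphs version is obtained in the paper's preamble only non-constructively (one representative per $\simeq_k$-class), and indeed no computable bound in $k$ alone can exist, since it would make finite satisfiability of FO sentences on graphs decidable, contradicting Trakhtenbrot's theorem. Your silent restriction to bounded treedepth is therefore the correct reading, not a gap.
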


Note that a kernelization result already exists for graphs of bounded shrubdepth~\cite{GajarskyH15}, which implies bounded treedepth. We however cannot use this result directly as a blackbox. Indeed, in order to prove the second part of our result, we need to certify locally that the kernel we computed is effectively the one computed from the original input graph. In particular, we need to provide a kernelization process that allows such a certification. That part fails if we use the kernel provided in~\cite{GajarskyH15} for graphs of bounded shrubdepth. Again, it is already not clear how to certify that a graph has bounded shrubdepth, which leads to the following question:

\begin{question}
Can graphs of shrub-depth at most $t$ be certified with $O(\log n)$ bits? Can we certify a kernel with $O(\log n)$ bits?
\end{question}

Informally, the main issue to generalize our result to shrubdepth is that vertices are only on leaves of the shrub-decomposition, and then we do not necessarily have small separators on which our proof is based for treedepth\footnote{Such small separators indeed do not necessarily exist since graphs of bounded shrubdepth might be dense.}. We think that solving this question will very likely allow to extend our proof to certification of MSO sentences for bounded shrubdepth graphs.

Now, given this kernel, we show how to certify its structure in Section~\ref{sec:kernel-certif}. And this is basically all we need for Theorem~\ref{thm:main}. Indeed, remember that the kernel size does not depend on the size of the original network $n$. Thus, it is possible to give the full map (that is, the adjacency matrix for example) of the kernel at every vertex, using only some $g(k,t)$ bits. Now, as the sentences satisfied by the kernel are the same as the sentences of the original graph, all vertices can just check independently in parallel that the kernel satisfies the sentence at hand.

\subsection{Related work and discussion of the model}

We refer to~\cite{Feuilloley21} for an introduction to local certification (both in terms of techniques and history), and to \cite{FeuilloleyF16} for a more complexity-theoretic survey. Most of the relevant related work has been reviewed in the previous sections, but there are two aspects that we would like to detail. 
First, there exists a line of work that builds bridges between distributed computation and logic, that we need to describe and compare to our approach. 
Second, we have already hinted that there is a difference between verification at distance one or at constant distance, and we elaborate a bit on this aspect.

\paragraph*{Distributed graph automata and modal logics}

A recent research direction consists in characterizing modal logics on graphs by various models of distributed local computation. This is similar to the approach of descriptive complexity in centralized computing, that aims at finding equivalences between computation models, and logic fragments (see~\cite{Immerman99} for a book on the topic). 

In this area, a paper that is especially relevant to us is~\cite{Reiter15}, which proves that MSO logic on graphs is equivalent to a model called alternating distributed graph automata. 
Let us describe what this automata model is, and then how it compares with our model. The nodes of the graph are finite-state machines, and they update their states synchronously in a constant number of  rounds. These are anonymous, that is, the nodes are not equipped with identifiers. The transition function of a node takes as input its state and the states of its neighbors in the form of a set (no counting is possible). 
At the end of the computation, the set of the states of the nodes, $F$, is considered, and the computation accepts if and only if $F$ is one of the accepting sets of states.
The alternating aspect is described in \cite{Reiter15} with computation branches, but in the context of our work it is more relevant to describe it informally as a prover/disprover game. The transition functions actually do not depend only on the states of the neighborhood, they also depend on additional labels given by two oracles, called prover and disprover. The prover and the disprover alternate in providing constant-size labels to the nodes, in order to reach respectively acceptance and rejection.

There are several substantial differences between our model and the model of~\cite{Reiter15}. First, our model is stronger in terms of local computation: we assume unbounded computation time, and non-constant space, whereas \cite{Reiter15} assumes finite-state machines. 
Second, our acceptance mechanism is weaker, in the sense that it is essentially the conjunction of a set of binary decisions, whereas \cite{Reiter15} uses an arbitrary function of a set of outputs. 
Third, we only have one prover, whereas \cite{Reiter15} has the full power of alternating oracles. 
Actually, variants of local certification using these two extensions have been considered (certification enhanced with general accepting functions in \cite{ArfaouiFIM14, ArfaouiFP13}, and generalized to an analogue of the polynomial hierarchy in \cite{FeuilloleyFH21, BalliuDFO18}), but here we are interested in the classic setting.

\paragraph*{Verification radius: one or constant}
An aspect of the model that is important in this paper is the locality of the verification algorithm. 
The original papers on local certification consider a model called \emph{proof-labeling schemes}~\cite{KormanKP10}, where the nodes only see their neighbors (in the spirit of the state model in self-stabilization~\cite{Dolev2000}).
Then, it was generalized in  \cite{GoosS16} to \emph{locally checkable proofs} where the vertices can look at a constant distance. They proved that several lower bounds (\emph{e.g.} for acyclicity certification) still hold in this model. 

The two models have pros and cons. Choosing constant distance is more appealing from a theoretical point of view, as it removes the distance $1$ constraint, which could seem arbitrary, but still captures a notion of locality. 
On the other hand, constant distance is not well-suited to contexts where we care about message sizes: with unbounded degree, looking at constant distance can translate into huge messages.
As noted in \cite{GoosS16}, due to their locality, FO formulas can be checked with no certificate if we can adapt the view of the node to the formula, and this can be extended to certification of monadic $\Sigma_1^1$ formulas if one allows $O(\log n)$-bit certificates. 

For this paper, we chose to fix the distance to $1$, in order to prevent this adaptation of the radius to the formula. 
Note that the difference between the two models can be dramatic. For example, deciding whether a graph has diameter $3$ or more, does not need any certificate if the nodes can see at distance $3$, but requires certificates of size linear in $n$ if they can only see their neighbors~\cite{Censor-HillelPP20}.

\section{Preliminaries}

\subsection{Treedepth}
Treedepth was introduced by Ne\v{s}et\v{r}il and Ossana de Mendez in~\cite{NesetrilM06} as a new graph parameter where model checking is more efficient. In the last ten years, this graph parameter received considerable attention (see \cite{NesetrilM12} for a book chapter about this parameter). Treedepth is related to other important width parameters in graphs. In particular, it is an upper bound on the pathwidth, which is another important parameter, especially in the study of minors~\cite{RobertsonS83} and interval graphs~\cite{Bodlaender98}.

Let $T$ be a rooted tree. A vertex $u$ is an \emph{ancestor} of $v$ in $T$, if $u$ is on the path between $v$ and the root. We say that $v$ is a \emph{descendant} of $u$ if $u$ is an ancestor of $v$.

\begin{definition}[\cite{NesetrilM06}]
The \emph{treedepth} of a graph $G$ is the minimum height of a forest $F$ on the same vertex set as $G$, such that for every edge $(u,v)$ of the graph $G$, $u$ is an ancestor or a descendant of $v$ in the forest. 
\end{definition}

When $G$ is connected, which is the case for this paper, the forest $F$ is necessarily a tree. 
Such a tree $T$ is called the \emph{elimination tree}, and, in a more logic-oriented perspective, it is called a \emph{model} of the graph. If the tree has depth at most $k$, it is a \emph{$k$-model of $G$} (see Figure~\ref{fig:treedepth}).
Note that there might be several such elimination trees / models. 

\begin{figure}[!h]
    \centering
    \begin{tabular}{cc}
    \scalebox{0.9}{
    \tikzset{every picture/.style={line width=0.75pt}} 

\begin{tikzpicture}[x=0.75pt,y=0.75pt,yscale=-1,xscale=1]

\draw [line width=1.5]    (100,137) -- (140,137) ;
\draw [line width=1.5]    (132.5,137) -- (172.5,137) ;
\draw [line width=1.5]    (172.5,137) -- (212.5,137) ;
\draw [line width=1.5]    (212.5,137) -- (252.5,137) ;
\draw [line width=1.5]    (252.5,137) -- (292.5,137) ;
\draw [line width=1.5]    (292.5,137) -- (332.5,137) ;
\draw  [fill={rgb, 255:red, 255; green, 255; blue, 255 }  ,fill opacity=1 ][line width=1.5]  (85,137) .. controls (85,133.13) and (88.36,130) .. (92.5,130) .. controls (96.64,130) and (100,133.13) .. (100,137) .. controls (100,140.87) and (96.64,144) .. (92.5,144) .. controls (88.36,144) and (85,140.87) .. (85,137) -- cycle ;
\draw  [fill={rgb, 255:red, 255; green, 255; blue, 255 }  ,fill opacity=1 ][line width=1.5]  (125,137) .. controls (125,133.13) and (128.36,130) .. (132.5,130) .. controls (136.64,130) and (140,133.13) .. (140,137) .. controls (140,140.87) and (136.64,144) .. (132.5,144) .. controls (128.36,144) and (125,140.87) .. (125,137) -- cycle ;
\draw  [fill={rgb, 255:red, 255; green, 255; blue, 255 }  ,fill opacity=1 ][line width=1.5]  (165,137) .. controls (165,133.13) and (168.36,130) .. (172.5,130) .. controls (176.64,130) and (180,133.13) .. (180,137) .. controls (180,140.87) and (176.64,144) .. (172.5,144) .. controls (168.36,144) and (165,140.87) .. (165,137) -- cycle ;
\draw  [fill={rgb, 255:red, 255; green, 255; blue, 255 }  ,fill opacity=1 ][line width=1.5]  (205,137) .. controls (205,133.13) and (208.36,130) .. (212.5,130) .. controls (216.64,130) and (220,133.13) .. (220,137) .. controls (220,140.87) and (216.64,144) .. (212.5,144) .. controls (208.36,144) and (205,140.87) .. (205,137) -- cycle ;
\draw  [fill={rgb, 255:red, 255; green, 255; blue, 255 }  ,fill opacity=1 ][line width=1.5]  (245,137) .. controls (245,133.13) and (248.36,130) .. (252.5,130) .. controls (256.64,130) and (260,133.13) .. (260,137) .. controls (260,140.87) and (256.64,144) .. (252.5,144) .. controls (248.36,144) and (245,140.87) .. (245,137) -- cycle ;
\draw  [fill={rgb, 255:red, 255; green, 255; blue, 255 }  ,fill opacity=1 ][line width=1.5]  (285,137) .. controls (285,133.13) and (288.36,130) .. (292.5,130) .. controls (296.64,130) and (300,133.13) .. (300,137) .. controls (300,140.87) and (296.64,144) .. (292.5,144) .. controls (288.36,144) and (285,140.87) .. (285,137) -- cycle ;
\draw  [fill={rgb, 255:red, 255; green, 255; blue, 255 }  ,fill opacity=1 ][line width=1.5]  (325,137) .. controls (325,133.13) and (328.36,130) .. (332.5,130) .. controls (336.64,130) and (340,133.13) .. (340,137) .. controls (340,140.87) and (336.64,144) .. (332.5,144) .. controls (328.36,144) and (325,140.87) .. (325,137) -- cycle ;

\draw (94.5,147.4) node [anchor=north west][inner sep=0.75pt]    {$1$};
\draw (134.5,147.4) node [anchor=north west][inner sep=0.75pt]    {$2$};
\draw (174.5,147.4) node [anchor=north west][inner sep=0.75pt]    {$3$};
\draw (214.5,147.4) node [anchor=north west][inner sep=0.75pt]    {$4$};
\draw (254.5,147.4) node [anchor=north west][inner sep=0.75pt]    {$5$};
\draw (294.5,147.4) node [anchor=north west][inner sep=0.75pt]    {$6$};
\draw (334.5,147.4) node [anchor=north west][inner sep=0.75pt]    {$7$};

\end{tikzpicture}}
    &
    \scalebox{0.9}{
    \tikzset{every picture/.style={line width=0.75pt}} 

\begin{tikzpicture}[x=0.75pt,y=0.75pt,yscale=-1,xscale=1]

\draw [line width=1.5]    (100,137) -- (132.5,107) ;
\draw [line width=1.5]    (132.5,107) -- (172.5,137) ;
\draw [line width=1.5]    (132.5,107) -- (207.5,77) ;
\draw [line width=1.5]    (292.5,107) -- (252.5,137) ;
\draw [line width=1.5]    (292.5,107) -- (207.5,77) ;
\draw [line width=1.5]    (292.5,107) -- (332.5,137) ;
\draw  [fill={rgb, 255:red, 255; green, 255; blue, 255 }  ,fill opacity=1 ][line width=1.5]  (85,137) .. controls (85,133.13) and (88.36,130) .. (92.5,130) .. controls (96.64,130) and (100,133.13) .. (100,137) .. controls (100,140.87) and (96.64,144) .. (92.5,144) .. controls (88.36,144) and (85,140.87) .. (85,137) -- cycle ;
\draw  [fill={rgb, 255:red, 255; green, 255; blue, 255 }  ,fill opacity=1 ][line width=1.5]  (125,107) .. controls (125,103.13) and (128.36,100) .. (132.5,100) .. controls (136.64,100) and (140,103.13) .. (140,107) .. controls (140,110.87) and (136.64,114) .. (132.5,114) .. controls (128.36,114) and (125,110.87) .. (125,107) -- cycle ;
\draw  [fill={rgb, 255:red, 255; green, 255; blue, 255 }  ,fill opacity=1 ][line width=1.5]  (165,137) .. controls (165,133.13) and (168.36,130) .. (172.5,130) .. controls (176.64,130) and (180,133.13) .. (180,137) .. controls (180,140.87) and (176.64,144) .. (172.5,144) .. controls (168.36,144) and (165,140.87) .. (165,137) -- cycle ;
\draw  [fill={rgb, 255:red, 255; green, 255; blue, 255 }  ,fill opacity=1 ][line width=1.5]  (200,77) .. controls (200,73.13) and (203.36,70) .. (207.5,70) .. controls (211.64,70) and (215,73.13) .. (215,77) .. controls (215,80.87) and (211.64,84) .. (207.5,84) .. controls (203.36,84) and (200,80.87) .. (200,77) -- cycle ;
\draw  [fill={rgb, 255:red, 255; green, 255; blue, 255 }  ,fill opacity=1 ][line width=1.5]  (245,137) .. controls (245,133.13) and (248.36,130) .. (252.5,130) .. controls (256.64,130) and (260,133.13) .. (260,137) .. controls (260,140.87) and (256.64,144) .. (252.5,144) .. controls (248.36,144) and (245,140.87) .. (245,137) -- cycle ;
\draw  [fill={rgb, 255:red, 255; green, 255; blue, 255 }  ,fill opacity=1 ][line width=1.5]  (285,107) .. controls (285,103.13) and (288.36,100) .. (292.5,100) .. controls (296.64,100) and (300,103.13) .. (300,107) .. controls (300,110.87) and (296.64,114) .. (292.5,114) .. controls (288.36,114) and (285,110.87) .. (285,107) -- cycle ;
\draw  [fill={rgb, 255:red, 255; green, 255; blue, 255 }  ,fill opacity=1 ][line width=1.5]  (325,137) .. controls (325,133.13) and (328.36,130) .. (332.5,130) .. controls (336.64,130) and (340,133.13) .. (340,137) .. controls (340,140.87) and (336.64,144) .. (332.5,144) .. controls (328.36,144) and (325,140.87) .. (325,137) -- cycle ;

\draw (94.5,147.4) node [anchor=north west][inner sep=0.75pt]    {$1$};
\draw (134.5,117.4) node [anchor=north west][inner sep=0.75pt]    {$2$};
\draw (174.5,147.4) node [anchor=north west][inner sep=0.75pt]    {$3$};
\draw (209.5,87.4) node [anchor=north west][inner sep=0.75pt]    {$4$};
\draw (254.5,147.4) node [anchor=north west][inner sep=0.75pt]    {$5$};
\draw (294.5,117.4) node [anchor=north west][inner sep=0.75pt]    {$6$};
\draw (334.5,147.4) node [anchor=north west][inner sep=0.75pt]    {$7$};

\end{tikzpicture}}
    \end{tabular}
    \caption{An example of an elimination tree. On the left the graph $G$, that is a path on seven vertices, and on the right an elimination tree $T$ of this graph. Since this tree has depth 2, the path has treedepth at most 2, and this is actually optimal.}
    \label{fig:treedepth}
\end{figure}
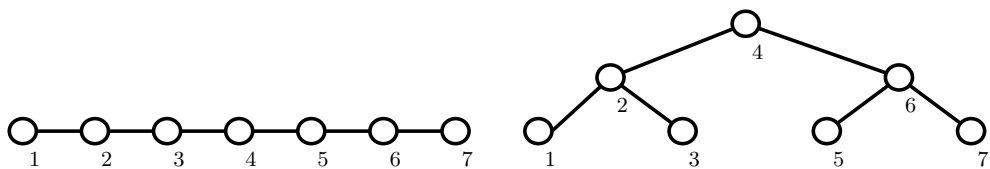

Let us fix an elimination tree. A \emph{vertex of $G$ has depth $d$}, if it has depth $d$ in the elimination tree. For any vertex $v$, let $G_v$ be the subgraph of $G$ induced by the vertices in the subtree of $T$ rooted in~$v$. Note that, for the root~$r$, $G_r=G$. Now, a model $T$ of $G$ is \emph{coherent} if, for every vertex $v$, the vertices of the subforest rooted in $v$ form a connected component in $G$. In other words, for every child $w$ of $v$, there exists a vertex $x$ of the subtree rooted in $w$ that is connected to~$v$.

One can easily remark that the following holds:

\begin{lemma}\label{lem:coherent1}
Let $G$ be a connected graph of treedepth $d$. Then there exists a tree $T$ that is a coherent $d$-model of $G$.
\end{lemma}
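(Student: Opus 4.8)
The plan is to proceed by induction on the number of vertices of $G$, exploiting the recursive structure of elimination trees. The guiding observation is that a model fails to be coherent exactly when some subtree bundles together several connected components of the graph obtained after deleting the ancestors of its root; so I would rebuild the model level by level so that the children of each node correspond \emph{precisely} to the genuine connected components that appear after deletion, which makes coherence automatic.

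First I would extract a root and set up the recursion. Since $G$ is connected and has treedepth $d$, it admits a model $T'$ of depth at most $d$; let $r$ be its root. Deleting $r$ from $T'$ leaves a forest of depth at most $d-1$ on $V(G)\setminus\{r\}$ in which every edge of $G-r$ still joins an ancestor to a descendant, because removing the topmost vertex preserves all ancestor--descendant relations among the remaining vertices. Hence $G-r$ has treedepth at most $d-1$. Let $C_1,\dots,C_k$ be the connected components of $G-r$; each $C_i$ is connected, has strictly fewer vertices than $G$, and has treedepth at most $d-1$. By the induction hypothesis each $C_i$ admits a coherent model $T_i$ of depth at most $d-1$, and I would then form $T$ by taking $r$ as root and attaching each $T_i$ below it, i.e. making the root of each $T_i$ a child of $r$. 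The single-vertex graph is the trivial base case.

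It then remains to verify the three required properties, all of which are short. The depth of $T$ is at most $1+\max_i \operatorname{depth}(T_i)\le d$, so $T$ is a $d$-model. For the model property, $G$ has no edge between distinct $C_i$ and $C_j$ (they are different components of $G-r$), so every edge of $G$ is either incident to $r$ or internal to some $C_i$: the former are fine since $r$ is an ancestor of all other vertices, and the latter are handled by $T_i$. For coherence, take any vertex $v$ of $T$: if $v=r$ the subtree rooted at $v$ spans all of $G$, which is connected; if $v$ lies in some $T_i$, its subtree in $T$ equals its subtree in $T_i$ (attaching $T_i$ below $r$ does not alter the descendants of internal vertices), and the corresponding vertex set $S\subseteq V(C_i)$ induces the same subgraph in $G$ as in $C_i$ because $C_i$ is an induced subgraph of $G$; coherence of $T_i$ then yields that $G[S]$ is connected.

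The main point needing care, though it is more conceptual than technical, is the decision to split along the \emph{true} connected components of $G-r$ rather than to follow the grouping inherited from $T'$: a subtree of $T'$ may lump several components of $G-r$ together, and this is exactly the obstruction to coherence. Replacing such bundles by the actual components repairs coherence, and the only thing to confirm is that this refinement does not increase the depth, which holds because the treedepth of a disconnected graph equals the maximum treedepth of its components.
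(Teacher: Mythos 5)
Your proof is correct, but it takes a genuinely different route from the paper's. The paper gives a one-paragraph extremal argument: among all $d$-models of $G$, take one minimizing the sum of the depths of the vertices; if coherence failed at some vertex $v$ with child $w$, the subtree rooted at $w$ could be re-attached to the lowest ancestor $v'$ of $v$ having a neighbour in $G_w$ (such a $v'$ exists since $G$ is connected), yielding a $d$-model with strictly smaller depth sum --- a contradiction, so the minimizer is already coherent. You instead build the coherent model constructively, by induction on $|V(G)|$: strip off the root $r$ of an arbitrary $d$-model, recurse on the genuine connected components of $G-r$ (each connected, smaller, and of treedepth at most $d-1$ by subgraph-monotonicity of treedepth), and hang the resulting coherent models under $r$. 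Both arguments are complete; note only that you verify coherence in the form ``$G_v$ induces a connected subgraph'', whereas the paper's operative condition is that every child subtree contains a neighbour of its parent --- these are equivalent for models, since subtrees of distinct siblings carry no cross edges, so connectivity of $G_v$ forces each child subtree to reach $v$ directly (and your construction even guarantees this explicitly at the root, as every component of $G-r$ sends an edge to $r$). As for what each approach buys: the paper's exchange argument is shorter, avoids induction, and yields the slightly stronger structural fact that \emph{any} depth-sum-minimal model is coherent; your recursion is algorithmic, makes coherence automatic by re-splitting each level along true components rather than repairing a bad model locally, and in passing re-derives the standard recursive (root-deletion) characterization of treedepth.
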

\begin{proof}
Let $T$ be a $d$-model of $G$ where the sum over all the vertices of $V$ of the depth of $v$ is minimized. We claim that $T$ is coherent. Assume by contradiction that there exists a vertex $v$, and one of its children $w$, such that no vertex of the subtree rooted in $w$ is connected to $v$. Let $v'$ be the lowest ancestor of $v$ connected to a vertex of $G_w$ (such a vertex must exist since $G$ is connected). We can attach the subtree of $w$ on $v'$ rather than $v$, without breaking the fact that the tree is a model of $G$. This new tree has a lower sum of depths than the original one, a contradiction with the minimality.
\end{proof}

One can wonder if we can assume that $w$ is connected to its closest ancestor. The answer is negative, for instance on the representation for a $k$-model of a path $P_{2^k-1}$. See Fig.~\ref{fig:treedepth}.

Using Lemma~\ref{lem:coherent1}, one can easily check that the following holds.

\begin{remark}
Let $T$ be a coherent $d$-model of a connected graph $G$ and $u$ be a vertex of $G$. Then $G_u$ induces a connected subgraph.
\end{remark}

\subsection{FO and MSO logics}

Graphs can be seen as relational structures on which properties can be expressed using logical sentences. The most natural formalism considers a binary predicate that tests the adjacency between two vertices. Allowing standard boolean operations and \emph{quantification on vertices}, we obtain the \emph{first-order logic} (FO for short) on graphs. Formally, a FO formula is defined by the following grammar:
$$x=y \mid x-y \mid \neg F \mid F\wedge F \mid F\vee F  \mid \forall x F \mid \exists x F$$
where $x,y$ lie in a fixed set of variables. Except for $x-y$, which denotes the fact that $x$ and $y$ are adjacent, the semantic is the classic one. 
Given a FO sentence $F$ (i.e. a formula where each variable falls under the scope of a corresponding quantifier) and a graph $G$, we denote by $G\vDash F$ when the graph $G$ satisfies the sentence $F$, which is defined in the natural way. 

MSO logic is an enrichment of FO, where we allow \emph{quantification on sets of vertices or edges}. We skip the formal definition of MSO here, because of the following result. It shows that for our purposes, FO and MSO have the same expressive power since we consider only bounded treedepth graphs. 

\begin{theorem}[\cite{Grohe17}]
For every integer $d$ and MSO sentence $\varphi$, there exists a FO sentence $\psi$ such that $\varphi$ and $\psi$ are satisfied by the same set of graphs of treedepth at most $d$. 
\end{theorem}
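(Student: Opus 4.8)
The plan is to prove a slightly stronger, type-level statement and then read off the theorem from it. Concretely, I would show that for every quantifier rank $q$ there is a rank $q'=q'(q,d)$ such that, on graphs of treedepth at most $d$, having the same FO-type of rank $q'$ forces having the same MSO-type of rank $q$. This is enough: every MSO sentence of rank $q$ is equivalent to a disjunction of complete MSO-$q$-types (Hintikka sentences), and if each such type coincides over the class with a union of FO-$q'$-types, then over graphs of treedepth at most $d$ the MSO sentence becomes equivalent to a Boolean combination of FO-$q'$-types, hence to a single FO sentence $\psi$. Since there are only finitely many types of each fixed rank, $\psi$ is well defined. To make the induction go through I would state everything for \emph{finitely coloured} graphs (graphs equipped with boundedly many extra unary predicates), which is the standard strengthening needed to absorb markings introduced during the decomposition.

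To obtain the refinement of types I would pass to Ehrenfeucht--Fra\"iss\'e games and argue that, on this class, a winning Duplicator strategy in the $q'$-round FO game can be upgraded to a winning strategy in the $q$-round MSO game; equivalently, that set moves can be simulated by boundedly many point moves. The engine is an induction on the treedepth $d$ via a Feferman--Vaught-style composition theorem along the elimination tree. Fix a coherent $d$-model (Lemma~\ref{lem:coherent1}) and decompose a connected $G$ at its root $r$: deleting $r$ leaves subtrees $G_1,\dots,G_m$, each of treedepth at most $d-1$, and every edge of $G$ is either incident to $r$ or internal to some $G_i$ — this is exactly the elimination-tree property that $\{r\}$ is a separator. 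A monadic set $X\subseteq V(G)$ is then nothing more than the root datum ``$r\in X$?'' together with the family of restrictions $X\cap V(G_i)$, and the neighbourhood $N(r)\cap V(G_i)$ can be recorded as one further colour on each $G_i$ (this is why I induct on coloured graphs). The composition theorem then says that the MSO-$q$-type of $(G,X)$ is determined by whether $r\in X$ and by the multiset of MSO-$q$-types of the coloured pointed components $(G_i, X\cap V(G_i))$; disjoint unions (gluing the connected pieces of a disconnected graph) are handled by the same composition principle.

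By the induction hypothesis each component type is controlled by an FO-$q'$-type of $G_i$, and — crucially — there are only \emph{boundedly} many such types, with a bound depending on $q$ and $d$ but not on $n$. Hence, to preserve the global MSO-$q$-type through a set move, Duplicator only has to reproduce, for each of the finitely many component types and each of the finitely many values of the relativised type induced by $X$, the number of components realising it, counted up to a fixed threshold (MSO-$q$-equivalence is insensitive to exact multiplicities beyond a threshold). Reproducing a bounded family of bounded threshold-counts is a first-order task, playable with a bounded number of point moves; this is precisely how a set move is traded for point moves, and it fixes $q'$ as an elementary function of $q$ and $d$, matching the effective bounds underlying \cite{ElberfeldGT16,Grohe17}.

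The main obstacle, and the place where bounded treedepth is indispensable, is establishing that the number of relevant component types stays bounded independently of $n$ through the induction — equivalently, that only a bounded amount of information about $X$ ever crosses the separator $\{r\}$. This is exactly the small-separator property supplied by bounded treedepth; without it (already for bounded pathwidth) the type count blows up and the collapse of MSO to FO fails. Getting the composition theorem and the threshold-counting bookkeeping exactly right — in particular handling $r$'s possibly unbounded adjacency into a single component uniformly by folding it into a colour, and carrying the pointed and relativised coloured types cleanly across the inductive step — is the technical heart of the argument.
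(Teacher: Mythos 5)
The paper never proves this statement: it is imported as a black box from the literature (it is really the Elberfeld--Grohe--Tantau theorem, cited elsewhere in the paper as~\cite{ElberfeldGT16}, despite the \cite{Grohe17} tag here), so there is no in-paper proof to compare against. Your reconstruction does follow the skeleton of the literature argument --- strengthen to coloured graphs, decompose at the root of a coherent elimination tree, use a Feferman--Vaught-style composition across the one-vertex separator, and control set moves by threshold counts of component types, with boundedness of the type count coming from bounded treedepth. That overall strategy is sound, and it is also close in spirit to this paper's own kernelization in Sections~\ref{sec:kernel}--\ref{sec:kernel-certif}, which prunes repeated subtree types beyond a threshold.

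There is, however, one genuine gap in your inductive step: the alignment of the decompositions on the two sides. To prove ``$G\equiv^{\mathrm{FO}}_{q'}H$ implies $G\equiv^{\mathrm{MSO}}_{q}H$ for treedepth at most $d$,'' you decompose $G$ at a \emph{valid} elimination root $r$ (so that the components of $G-r$ have treedepth at most $d-1$). But on the $H$ side, FO-equivalence only supplies some vertex $s$ with the same FO profile as $r$; nothing forces $s$ to be a valid root, so the components of $H-s$ may still have treedepth $d$, and your induction hypothesis does not apply to them --- matching FO-types of components then yields nothing about their MSO-types. Weakening the hypothesis to constrain only one side does not repair this: the strongest FO-transferable consequence of treedepth $\le d-1$ is the absence of a path on $2^{d-1}$ vertices, which only bounds the other component's treedepth by roughly $2^{d-1}\ge d$, so the induction fails to descend. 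And expressing ``$r$ is a valid root'' in FO on the class is itself an instance of the theorem being proved (``treedepth $\le d-1$'' is MSO), so a naive appeal is circular; the published proof needs a more careful simultaneous induction or normal form precisely here, and your sketch is silent on it. Two smaller omissions: your threshold-count bookkeeping quietly requires components to be FO-accessible, which holds only because connected graphs of treedepth $\le d$ have diameter less than $2^{d}$ (this should be stated); and note that a route avoiding the root-alignment issue altogether is available from this paper's own toolkit --- show that threshold-$k$ pruning preserves MSO-rank-$q$ types as well as FO types (the Ehrenfeucht--Fra\"iss\'e argument of Section~\ref{sec:kernel} extends to set moves by threshold counting), after which the kernel has size bounded by $f(k)$, an FO sentence of rank $f(k)+1$ pins the kernel down up to isomorphism, and $\psi$ is a finite disjunction of Hintikka sentences.
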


In the following, we are looking for a kernelization result for the model checking problem, where the kernel is checkable with small certificates. In particular, given a sentence $\varphi$ and a graph $G$, we have to prove that the graph $H$ output by our kernelization algorithm satisfies $\varphi$ if and only if so does $G$. We actually show a stronger result, namely that for every integer $k$ and every graph $G$, there exists a graph $H_k$ satisfying the same set of sentences with at most $k$ nested quantifiers as $G$. In that case we write $G\simeq_k H_k$. This yields the required result when $k$ is quantifier depth of $\varphi$. 

The canonical tool to prove equivalence between structures is the so-called Ehrenfeucht-Fraïssé game. This game takes place between two players, Spoiler and Duplicator. The arena is given by two structures (here, graphs) and a number $k$ of rounds. At each turn, Spoiler chooses a vertex in one of the graphs, and Duplicator has to answer by picking a vertex in the other graph. At turn $i$, assume that the positions played in the first (resp. second) graph are $u_1,\ldots,u_i$ (resp. $v_1,\ldots,v_i$). Spoiler wins at turn $i$ if the mapping $u_j\mapsto v_j$ is not an isomorphism between the subgraphs induced by $\{u_1,\ldots,u_i\}$ and $\{v_1,\ldots,v_i\}$. If Spoiler does not win before the end of the $k$-th turn, then Duplicator wins.

The main result about this game is the following, which relates winning strategies with equivalent structures for $\simeq_k$. 

\begin{theorem}
\label{thm:EF}
Let $G,H$ be two graphs and $k$ be an integer. Duplicator has a winning strategy in the $k$-round Ehrenfeucht-Fraïssé game on $(G,H)$ if and only if $G\simeq_k H$. 
\end{theorem}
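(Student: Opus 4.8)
The plan is to prove this classical Ehrenfeucht-Fra\"{\i}ss\'e theorem by induction on $k$, after first strengthening the statement so that it is stable under partial plays. The right invariant to carry through the induction relates positions in the game to indistinguishability by low-rank formulas, so I would prove the following for all graphs $G,H$ simultaneously: for every $m$ and all tuples $\bar a=(a_1,\dots,a_m)$ in $G$ and $\bar b=(b_1,\dots,b_m)$ in $H$, Duplicator wins the $k$-round game starting from the position $(\bar a,\bar b)$ if and only if $(G,\bar a)$ and $(H,\bar b)$ satisfy the same FO formulas with free variables among $x_1,\dots,x_m$ and at most $k$ nested quantifiers. The theorem is then the case $m=0$. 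For the base case $k=0$, winning the $0$-round game from $(\bar a,\bar b)$ means precisely that $a_j\mapsto b_j$ is an isomorphism between the induced subgraphs, which, since the only predicates are $=$ and $-$, is exactly the condition that $\bar a$ and $\bar b$ agree on all quantifier-free formulas.

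For the inductive step I would treat the two directions separately. For the direction ``Duplicator wins $\Rightarrow$ $k$-equivalence'' I argue by contraposition: suppose some formula $\varphi$ of rank at most $k$ separates $(G,\bar a)$ from $(H,\bar b)$. Putting the outermost quantifier in front, we may assume $\varphi=\exists x\,\psi$ with $\psi$ of rank $k-1$ and, say, $(G,\bar a)\vDash\varphi$ but $(H,\bar b)\not\vDash\varphi$. Then Spoiler plays a witness $a$ for $x$ in $G$; for \emph{any} reply $b$ in $H$ we have $(G,\bar a,a)\vDash\psi$ while $(H,\bar b,b)\not\vDash\psi$, so $\psi$ separates the new position, and by the induction hypothesis Spoiler wins the remaining $(k-1)$-round game. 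Hence if Duplicator wins the $k$-round game, no such $\varphi$ exists.

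The reverse direction is the substantial one. Assume $(G,\bar a)$ and $(H,\bar b)$ are $k$-equivalent, and let Spoiler move, say playing $a$ in $G$. I would form the \emph{rank-$(k-1)$ type} of $(\bar a,a)$, namely the conjunction of all rank-$(k-1)$ formulas in the variables $x_1,\dots,x_m,x$ satisfied by $(\bar a,a)$. The crucial point, and the step I expect to be the main obstacle, is that up to logical equivalence there are only finitely many formulas of quantifier rank at most $k-1$ over our finite vocabulary and a fixed finite set of variables; this lets the type be replaced by a single FO formula $\tau(\bar x,x)$ of rank $k-1$ (a Hintikka formula). Then $\exists x\,\tau(\bar x,x)$ has rank $k$ and holds in $(G,\bar a)$, so by $k$-equivalence it holds in $(H,\bar b)$; any witness $b$ satisfies $(H,\bar b,b)\vDash\tau$, which means $(G,\bar a,a)$ and $(H,\bar b,b)$ are $(k-1)$-equivalent. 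Duplicator replies with this $b$ (and symmetrically when Spoiler plays in $H$), preserving the invariant. After $k$ rounds Duplicator sits in a $0$-equivalent position, i.e.\ a partial isomorphism, so Duplicator never loses.

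The technical heart is thus the finiteness-up-to-equivalence lemma, which I would establish by a separate induction on $k$ showing that formulas of rank at most $k$ in a bounded number of variables fall into finitely many equivalence classes; this is exactly what makes the Hintikka formula a genuine finite conjunction. Once that is in place, both directions of the main induction are a routine unwinding of the semantics and the induction hypothesis.
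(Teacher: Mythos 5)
Your proof is correct, but note that the paper does not prove Theorem~\ref{thm:EF} at all: it is invoked as the classical Ehrenfeucht--Fra\"{\i}ss\'e theorem, with a pointer to Thomas's survey for background. What you have written is exactly the standard textbook proof that such references give --- strengthening to positions $(\bar a,\bar b)$, induction on $k$, contraposition via a separating formula for one direction, and Hintikka formulas (rank-$(k-1)$ types, made finite conjunctions by the finiteness-up-to-equivalence lemma) for the other --- so there is no divergence of approach to discuss, only your proof versus a citation.

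One wording in your ``wins $\Rightarrow$ equivalence'' direction deserves a fix, though the underlying argument is sound. You say ``putting the outermost quantifier in front, we may assume $\varphi=\exists x\,\psi$ with $\psi$ of rank $k-1$''; as literally stated this suggests prenexing, which does \emph{not} preserve quantifier rank (e.g.\ $(\exists x\,P(x))\wedge(\exists y\,Q(y))$ has rank $1$ but its prenex form has rank $2$). The correct reduction is structural: a rank-$\le k$ formula is a boolean combination of atomic formulas and formulas $\exists x\,\psi$ with $\mathrm{qr}(\psi)\le k-1$, and disagreement on a boolean combination forces disagreement on one of these constituents. Disagreement on an atomic formula means the current position is not a partial isomorphism, which your base case already identifies with Duplicator losing the $0$-round game; disagreement on some $\exists x\,\psi$ is the case your Spoiler strategy handles. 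With that rephrasing, everything goes through: the contrapositive step correctly quantifies over all replies $b$, the Hintikka-formula step correctly uses $\exists x\,\tau$ of rank $k$, and the finiteness lemma you isolate is indeed the only nontrivial ingredient and is provable by the separate induction you describe.
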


See \cite{Thomas93} for a survey on Ehrenfeucht-Fraïssé games and its applications in computer science.

\subsection{Local certification: definitions and basic techniques}
\label{subsec:basic-certif}

We assume that the vertices of the graph are equipped with unique identifiers, also called IDs, in a polynomial range $[1, n^k]$ ($k$ being a constant). Note that an ID can be written on $O(\log n)$ bits.

In this paper, a local certification is described by a local verification algorithm, which is an algorithm that takes as input the identifiers and the labels of a node and of its neighbors, and outputs a binary decision, usually called \emph{accept} or \emph{reject}. A local certification of a logical formula is a local verification algorithm such that:
\begin{itemize}
    \item If the graph satisfies the formula, then there exists a label assignment, such that the local verification algorithm accepts at every vertex.
    \item If the graph does not satisfy the formula, then for every label assignment, there exists at least one vertex that rejects. 
\end{itemize}

A graph that satisfies the formula is a \emph{yes-instance}, and a graph that does not satisfy the formula is a \emph{no-instance}. 
The labels are called \emph{certificates}. It is equivalent to consider that there is an entity, called the \emph{prover}, assigning the labels (a kind of external oracle). The size $f(n)$ of a certification is the size of its largest label for graphs of size $n$. The certification size of a formula or a type of formula is (asymptotic) minimum size of a local certification. 

Let us now sketch some standard tools for local certification that we repeatedly use in the rest of the paper. (We refer to \cite{Feuilloley21} for full explanations and proofs.) Before going to certification, let us just remind that a vertex accesses its degree locally without any certificate. 

The most useful tool in local certification is arguably the spanning tree. Suppose you want to certify that at least one vertex of the graph has a special property (\emph{e.g.} at least one vertex has degree~$10$). 
On \emph{yes}-instances, each node has to know that somewhere in the graph, some vertex will indeed be special. To give this information to every node, the prover can design labels by choosing an arbitrary spanning tree of the graph rooted at the special vertex, and then provide each node with the following information: (1) the identifier of its parent in the tree (or itself, if it is the root), (2) the identifier of the root, and (3) the distance from the node to the root in the tree.

The vertices can check this certification. First, the vertices can perform basic sanity checks on the spanning structure, and this will be enough to ensure the correctness. For example, if the set of pointers contains a cycle, then there will be an inconsistency in the distances; And if it has several connected components, either two adjacent vertices will have been given different identifiers for the root, or one of the vertices with distance $0$ will not have the identifier announced for the root. 
Now that the spanning tree is certified, the root checks that it indeed has the special property (\emph{e.g.} being of degree $10$).

The spanning tree structure can also be used to certify the number of vertices in the graph: in addition to the spanning tree, on \emph{yes}-instances, the prover can give to every vertex the total number of nodes in the graph and the number of nodes in its subtree. Every node can locally check the sum, and the root can check that the two numbers it has been assigned are equal. 

Note that the certifications discussed above all use certificates on $O(\log n)$ bits.

\section{Certification of small fragments: Proof of Lemma~\ref{lem:existsFO}}
\label{sec:fragment}

This section is devoted to prove Lemma~\ref{lem:existsFO}.

\lemFragments*

Let us first prove the following lemma:

\begin{lemma}\label{lem:FOext}
Existential FO sentences with $k$ quantifiers (i.e. whose prenex normal form has only existential quantifiers) can be certified with $O(k \log n)$ bits.
\end{lemma}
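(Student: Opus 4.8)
The plan is to put the sentence in prenex form $\exists x_1\cdots\exists x_k\,\psi(x_1,\dots,x_k)$, where $\psi$ is quantifier-free, i.e. a boolean combination of atoms $x_i=x_j$ and $x_i-x_j$. The crucial observation is that whether $\psi(v_1,\dots,v_k)$ holds for a tuple of vertices depends only on the \emph{type} of the tuple, that is, on the pattern of equalities and adjacencies among the $v_i$. This type is a structure on at most $k$ elements and can be written down in $O(k^2)$ bits. So on a yes-instance I would have the prover fix a witnessing assignment $x_i\mapsto v_i$ satisfying $\psi$ and give every node: (i) the list of identifiers $\mathrm{id}(v_1),\dots,\mathrm{id}(v_k)$; (ii) the $O(k^2)$-bit description $D$ of the type of $(v_1,\dots,v_k)$; and (iii) for each $i$, the labels of a spanning tree rooted at $v_i$ (parent pointer, distance, announced root identifier), using the standard construction recalled in Subsection~\ref{subsec:basic-certif}.

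The certificate size is $O(k\log n)$ for the identifiers and the $k$ spanning trees and $O(k^2)$ for $D$, hence $O(k^2+k\log n)=O(k\log n)$ for a fixed number $k$ of quantifiers; in particular this yields $O(\log n)$ for a fixed existential sentence, as required for Lemma~\ref{lem:existsFO}. For the verification, every node checks the $k$ spanning trees (consistency of distances and root identifiers) and checks, as a purely local computation on the size-$k$ object $D$, that $D\vDash\psi$. In addition, every node whose own identifier equals some $\mathrm{id}(v_i)$ audits row $i$ of $D$: for every $j$ it checks that $D$ asserts $v_i-v_j$ if and only if $\mathrm{id}(v_j)$ appears among the identifiers of its actual neighbors, and that $D$ asserts $v_i=v_j$ if and only if $\mathrm{id}(v_i)=\mathrm{id}(v_j)$.

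For soundness, suppose all nodes accept on some graph $G$. The spanning-tree checks force, for each $i$, the existence of a genuine vertex $w_i\in G$ with $\mathrm{id}(w_i)=\mathrm{id}(v_i)$, since a false root or a broken forest is caught by the distance and root-identifier tests. Because each such $w_i$ audited row $i$ of $D$ against its true neighborhood, and a vertex sees exactly the identifiers of its neighbors, the structure induced on $\{w_1,\dots,w_k\}$ is precisely $D$ (both edges and non-edges being checked). As every node also verified $D\vDash\psi$, the assignment $x_i\mapsto w_i$ satisfies $\psi$ in $G$, so $G\vDash\exists x_1\cdots\exists x_k\,\psi$. Contrapositively, on a no-instance at least one node must reject, while on a yes-instance the honest certificate above makes all nodes accept.

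The hard part is ensuring the prover cannot misreport the induced subgraph, and in particular cannot falsely claim a non-edge between two witnesses. This is exactly what is handled by letting each real witness $w_i$ locally audit its own row of $D$: since a node sees the identifiers of all its neighbors, both the edges \emph{and} the non-edges incident to $w_i$ are locally checkable, which pins $D$ down to the true type of the witnesses and leaves no room to cheat. This local auditing of non-adjacencies is the delicate ingredient that makes existential sentences certifiable, and it is precisely what has no analogue for universal sentences, matching the discussion preceding the triangle-freeness question.
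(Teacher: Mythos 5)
Your proposal is correct and follows essentially the same route as the paper: spanning trees rooted at each witness $v_i$, the identifiers of the witnesses and the full $O(k^2)$-bit description of their induced structure broadcast to every node, and local evaluation of the quantifier-free part. You are in fact slightly more explicit than the paper about the soundness step --- each witness auditing its own row of the adjacency description, including non-edges and equalities --- but this is exactly what the paper's verification implicitly relies on, so the two arguments coincide.
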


\begin{proof}
Let $G$ be a connected graph and $\exists x_1\cdots\exists x_k\varphi$ be an existential FO sentence where $\varphi$ is quantifier-free. Let $v_1,\ldots,v_k$ be $k$ vertices such that the formula is satisfied by $v_1,\ldots,v_k$. 

Every vertex receives the following certificate:
\begin{itemize}
    \item The list of identifiers of vertices $v_1,\ldots,v_k$.
    \item The $k\times k$ adjacency matrix of the subgraph induced by $v_1,\ldots,v_k$.
    \item The certificate of a spanning tree rooted on $v_i$ for every $i \le k$ (see Subsection~\ref{subsec:basic-certif}).
\end{itemize}
Every node then checks the certificate as follows. First, every node checks that its neighbors have the same list of vertices $v_1,\ldots,v_k$ and the same adjacency matrix. Then every node checks the certificate of the spanning tree of each $v_i$. Finally, each of the vertices $v_1,\ldots,v_k$ can now use the adjacency matrix to evaluate $\varphi$ on $(v_1,\ldots,v_k)$ and check that it is satisfied.
\end{proof}

Let us now prove the second part of Lemma~\ref{lem:existsFO}.

\begin{lemma}
FO sentences with quantifier depth at most 2 can be certified with $O(\log n)$ bits.
\end{lemma}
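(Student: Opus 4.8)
The plan is to classify, up to logical equivalence, the FO sentences of quantifier depth at most $2$ on (connected, loopless) graphs, and show each resulting property admits an $O(\log n)$-bit certification. First I would put the sentence in prenex normal form and observe that with at most two nested quantifiers the quantifier prefix is one of $\exists x\exists y$, $\forall x\forall y$, $\exists x\forall y$, $\forall x\exists y$ (together with depth-$1$ and depth-$0$ cases, which are trivial). The matrix is a quantifier-free formula in the atoms $x=y$ and $x-y$ (plus possibly unary content from a single variable). Since our structures are loopless, $x=y$ implies $\neg(x-y)$, so over pairs the only distinguishable ``types'' are: equal, adjacent, or nonadjacent-and-distinct. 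The core of the argument is therefore a finite case analysis: each depth-$2$ sentence is equivalent to a Boolean combination of a constant number of primitive statements, and I would argue that the only nontrivial ones are essentially ``there exists a dominant vertex'' (a vertex adjacent to all others) and its universal dual, together with ``$G$ is a clique'' / ``$G$ is edgeless''; every other sentence reduces to a comparison of $n$ against a constant or to one of these.

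Concretely, I would handle the prefixes in turn. For $\exists x\exists y$, the existential case is already covered by Lemma~\ref{lem:FOext} with $k=2$, giving $O(\log n)$ bits. The genuinely new prefixes are the universal and mixed ones. For $\forall x\exists y\,\psi(x,y)$, after simplifying $\psi$ the interesting instance is ``every vertex has a neighbor distinct from itself'', which holds automatically since $G$ is connected with at least two vertices (or, in degenerate forms, reduces to a statement about $n$); more generally the only way $\forall x$ can fail is if some vertex is a uniform witness, i.e.\ we are really certifying the existence or nonexistence of a vertex with a fixed local type. For $\exists x\forall y\,\psi(x,y)$ the canonical property is the existence of a dominant vertex (when $\psi$ forces $x-y$ for all $y\neq x$); this is an existential statement over one vertex whose defining condition is checkable at that single vertex once it knows $n$. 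For $\forall x\forall y\,\psi(x,y)$ the canonical properties are ``$G$ is a clique'' ($\psi$ forces every distinct pair to be adjacent) and ``$G$ is edgeless'' (excluded here since connected graphs on $\ge 2$ vertices have an edge), each a per-vertex degree condition.

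The certification then reuses exactly two ingredients from Subsection~\ref{subsec:basic-certif}: certifying the vertex count $n$ (via a spanning tree plus subtree sizes) and the fact that every vertex reads its own degree for free. With $n$ certified, ``$G$ is a clique'' is checked by having every vertex verify $\deg = n-1$; ``there is a dominant vertex'' is certified by rooting a spanning tree at a claimed dominant vertex $v$ and having $v$ check $\deg(v)=n-1$ (all of this costs $O(\log n)$ bits). Negations and Boolean combinations are handled by noting that the relevant primitive properties are either universally checkable (``every vertex has degree $n-1$'') or their negation is an existential ``some vertex has degree $<n-1$'', which falls under the single-vertex existential scheme; a constant-size Boolean combination of $O(\log n)$-bit certifications is still $O(\log n)$ bits since the number of atoms depends only on the fixed sentence.

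The main obstacle I anticipate is making the case analysis genuinely \emph{exhaustive} rather than illustrative: one must argue that over loopless graphs, \emph{every} quantifier-free matrix in two variables collapses to a Boolean combination of the handful of types above, so that no depth-$2$ sentence escapes the list. The subtle points are (i) correctly accounting for the $x=y$ atom, which lets a ``$\forall x\forall y$'' sentence silently encode a unary (degree or count) condition by instantiating $y:=x$, and (ii) ensuring that sentences mixing equalities and adjacencies do not produce a new property beyond clique/dominant-vertex/count-of-$n$. I would discharge this by passing to the three pair-types explicitly and enumerating which Boolean functions of them, once quantified, yield a property not already reducible to a certifiable one; each surviving case is then matched to one of the constant-cost certifications above.
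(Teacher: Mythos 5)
Your proposal follows essentially the same route as the paper: a finite classification of quantifier-depth-2 sentences into primitive properties (at most one vertex, clique, dominating vertex, and their negations), then certification via a certified vertex count, degree checks, and a spanning tree pointed at a witness vertex --- this is exactly the paper's scheme, and your use of Lemma~\ref{lem:FOext} for the purely existential prefix is consistent with it. One step is wrong as written, though: you cannot begin by putting the sentence in prenex normal form and then assume a two-quantifier prefix. Quantifier depth bounds the \emph{nesting} of quantifiers, not their number; a depth-2 sentence such as $(\forall x\forall y\; x-y)\vee(\exists x\forall y\,(x=y\vee x-y))$ prenexes to a four-quantifier prefix, and prenexing in general interleaves blocks and inflates the prefix, destroying the structure your four-prefix case analysis relies on. The paper sidesteps this by normalizing instead to a boolean combination of sentences of the form $Qx\,\psi(x)$, where $\psi(x)$ is itself a boolean combination of formulas $Qy\,\pi(x,y)$ with $\pi$ quantifier-free, and classifying the possible $\psi$ (``only vertex'', ``dominating vertex'', ``dominating but not only'', and negations) before quantifying over $x$. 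Your own closing remark --- that a constant-size boolean combination of certifiable primitives is certifiable with $O(\log n)$ bits --- already contains the repair: apply your per-prefix classification to each depth-2 block of the boolean combination rather than to a global prenex form. With that fix, your classification and certifications coincide with the paper's, and the exhaustiveness concern you raise is discharged exactly as you suggest, by enumerating the three pair-types ($x=y$, adjacent, distinct and non-adjacent) and using connectivity to collapse the degenerate cases.
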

\begin{proof}
First, observe that sentences with quantifier depth one are satisfied by either all graphs or none of them. We thus consider the depth 2 case.

Let $\varphi$ be a sentence of quantifier depth at most two. Without loss of generality, we may assume that $\varphi$ is a boolean combination of sentences of the form $Qx \psi(x)$ where $\psi(x)$ is again a boolean combination of formulas of the form $Qy\pi(x,y)$ where $\pi(x,y)$ is quantifier-free. 

Observe that up to semantic equivalence, $\pi(x,y)$ can only express that $x=y$, $xy$ is an edge, $xy$ is a non-edge, or the negation of these properties.

Trying the two possible ways of quantifying $y$ in these six properties, we end up showing (using that our graphs are connected) that $\psi(x)$ lies among these three properties or their negations:
\begin{itemize}
\item $x$ is the only vertex.
\item $x$ is a dominating vertex.
\item $x$ is not the only vertex but dominates the graph.
\end{itemize}

Now, quantifying on $x$ leaves only a few choices for $\varphi$, namely boolean combinations of the following:
\begin{enumerate}
    \item The graph has at most one vertex.
    \item The graph is a clique.
    \item The graph has a dominating vertex.
\end{enumerate}

Since certifying disjunction or conjunction of certifiable sentences without blow up (asymptotically) in size is straightforward, it is sufficient to show that the three properties and their negations can all be checked with $O(\log(n))$-bit certificates.

Since our graphs are connected, Property 1 is equivalent to say that every vertex has degree 0, which can be checked with empty certificates. Similarly, its negation is equivalent to having minimum degree 1 which can be checked similarly.

For Property 2 (resp. the negation of 3), we begin by computing the number $n$ of vertices in the graph and certify it locally (see Subsection~\ref{subsec:basic-certif}). The verification algorithm then just asks whether the degree of each vertex is $n-1$ (resp. less than $n-1$).

For Property 3 (resp. the negation of 2), we again compute and certify the number $n$ of vertices. We additionally certify a spanning tree rooted at a vertex of degree $n-1$ (resp. less than $n-1$). The root then just check that it has indeed the right degree.
\end{proof}

\section{Certification of treedepth}
\label{sec:treedepth-certif}

This section is devoted to the proof of the following theorem.

\thmCertifyTreedepth*

\begin{proof}
Let $v$ be a vertex, and $w$ be its parent in the tree, we define an \emph{exit vertex of $v$} as a vertex $u$ of $G_v$ connected to $w$. Note that such a vertex must exist, if the model is coherent. 

We now describe a certification.
On a \emph{yes}-instance, the prover finds a coherent elimination tree of depth at most $t$, and assigns the labels in the following way.

\begin{itemize}
    \item Every vertex $v$ is given the list of the identifiers of its ancestors, from its own identifier to the identifier of the root.
    \item For every vertex $v$, except the root, the prover describes and certifies a spanning tree of $G_v$, pointing to the exit vertex of $v$. (See Subsection~\ref{subsec:basic-certif} for the certification of spanning trees.) The vertices of the spanning tree are also given the depth $k$ of $v$ in the elimination tree.
\end{itemize}

Note that the length of the lists is upper bounded by $t$, and that every vertex holds a piece of spanning tree certification only for the vertices of its list, therefore the certificates are on $O(t \log n)$ bits. Now, the local verification algorithm is the following. For every vertex $v$ with a list $L$ of length~$d+1$, check that:

\begin{enumerate}
    \item \label{item:at-most-t} $d\leq t$, and $L$ starts with the identifier of the vertex, and ends with the same identifier as in the lists of its neighbors in the graph.
    \item \label{item:edges-ancestors} The neighbors in $G$ have lists that are suffix or extension by prefix of $L$.
    \item \label{item:ST-for-each-depth} There are $d$ spanning trees described in the certificates.
    \item \label{item:ST-verif} For every $k\leq d$, for the spanning trees associated with depth $d$: 
    \begin{itemize}
        \item The tree certification is locally correct.
        \item The neighbors in the tree have lists with the same $(k+1)$-suffix.
        \item If the vertex is the root, then it has a neighbor whose list is the $k$-suffix of its own list. 
    \end{itemize}
\end{enumerate}

It is easy to check that on \emph{yes}-instances the verification goes through.
Now, consider an instance where all vertices accept. We shall prove that then we can define a forest, such that the lists of identifiers given to the nodes are indeed the identifiers of the ancestors in this forest. Once this is done, the fact that Steps~\ref{item:at-most-t} and~\ref{item:edges-ancestors} accept implies that the forest is a tree of the announced depth, and is a model of the graph. Let us first prove the following claim:

\begin{claim}
\label{clm:one-step}
For every vertex $u$, with a list $L$ of size at least two, there exists another vertex $v$ in the graph whose list is the same as $L$ but without the first element. 
\end{claim}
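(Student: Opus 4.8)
The plan is to extract the desired vertex $v$ from the spanning-tree gadget that $u$ carries for its own subtree. Write $L=(\ell_0,\ell_1,\dots,\ell_d)$ with $\ell_0=u$ and $d\ge 1$ (since $|L|\ge 2$). Because $u$ is not the root of the whole structure, its certificate contains, among the $d$ spanning trees of Step~\ref{item:ST-for-each-depth}, the one associated with its own depth $d$; call it $S$. I would first use the correctness of spanning-tree certification (Subsection~\ref{subsec:basic-certif}) to argue that, since every vertex accepts, the parent pointers of $S$ define a genuine tree with a unique root $r$, and that $u$ reaches $r$ by following these pointers, so that $u$ and $r$ lie in the same tree component of $S$.

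The core of the argument is then a suffix-tracking step. By the first two bullets of Step~\ref{item:ST-verif} applied with $k=d$, any two $S$-adjacent vertices carry lists sharing the same $(d+1)$-suffix; since $S$ is connected and contains $u$, every vertex of $S$, in particular $r$, has the $(d+1)$-suffix of its list equal to the $(d+1)$-suffix of $u$'s list. As $u$'s list has length exactly $d+1$, that common suffix is $L$ itself. Consequently the $(d+1)$-suffix of $r$'s list is $(\ell_0,\ell_1,\dots,\ell_d)$, and hence its $d$-suffix is $(\ell_1,\dots,\ell_d)$.

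To finish, I would invoke the last bullet of Step~\ref{item:ST-verif}: since $r$ is the root of $S$, it has a neighbor $v$ in $G$ whose list equals the $d$-suffix of $r$'s list, namely $(\ell_1,\dots,\ell_d)$. This $v$ is exactly $L$ with its first element removed, as required; moreover $v\neq u$, since its list starts with $\ell_1\neq \ell_0$ (by Step~\ref{item:at-most-t}, a list starts with its owner's identifier).

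The step I expect to be most delicate is the first one: turning the purely local acceptance of the spanning-tree checks into the global existence of a single well-defined root $r$, together with the guarantee that $u$'s component of $S$ actually reaches it. This is where one must lean on the standard spanning-tree certification guarantees (no cycles, thanks to the distance field, and a unique root per component), and be careful that the ``depth $d$'' spanning tree seen by $u$ is the same consistent object at all the vertices that carry a piece of it. Everything else is bookkeeping on suffix lengths, which matches up precisely because $|L|=d+1$.
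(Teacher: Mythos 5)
Your proof is correct and takes essentially the same route as the paper's: the paper likewise considers the spanning tree associated with $u$'s depth $d$, uses the fact that all its vertices share the same $(d+1)$-suffix (which equals $L$ since $u$'s list has length exactly $d+1$), and obtains $v$ as the root's neighbor whose list is the $d$-suffix, via Step~\ref{item:ST-verif}. The extra care you take---deriving a well-defined root from the standard spanning-tree certification guarantees and checking $v\neq u$---merely makes explicit what the paper's shorter argument leaves implicit.
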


Consider a vertex $u$ like in Claim~\ref{clm:one-step}, at some depth $d$. If all vertices accept, then this vertex is has a spanning tree corresponding to depth $d$ (by Step~\ref{item:ST-for-each-depth}), where all vertices have the same $(d+1)$-suffix, and the root of this tree has a neighbor whose list is $L$, without the first identifier, by Step~\ref{item:ST-verif}. This vertex is the $v$ of the claim. 

The claim implies that the whole tree structure is correct. Indeed, if we take the vertex set of $G$, and add a pointer from every vertex $u$ to its associated vertex $v$ (with the notations of the claim), then the set of pointers must form a forest. In particular, there cannot be cycles, because the size of the list is decremented at each step. Also, if the ancestors are consistent at every node, then they are consistent globally. This finishes the proof of Theorem~\ref{thm:certify-treedepth}.
\end{proof}

\section{A kernel for FO model checking}
\label{sec:kernel}

In this section, we describe our kernel for FO model checking in bounded treedepth graphs. This basically mean that we describe a way to associate to every graph, a smaller graph, that satisfies exactly the same formulas. Let us be a bit more precise.
For any graph $G$ (without treedepth assumption), and any integer $k$, there exists a graph $H_k$ whose size is at most $f(k)$, such that for any FO-property~$\varphi$ of quantifier depth at most $k$, $G\vDash\varphi$ if and only if $H_k\vDash\varphi$. 
Indeed, since we have a bounded number of formulas of quantifier depth at most $k$ (up to semantic equivalence), we have a bounded number of equivalent classes of graphs for $\simeq_k$. We can associate to each class the smallest graph of the class, whose size is indeed bounded by a function of only $k$.
However, this definition of $H_k$ is not constructive, which makes it impossible to manipulate for certification. 
What we do in this section is to provide a constructive way of finding a graph $H_k$ such that $G\simeq H_k$,  when $G$ has bounded treedepth. 
Moreover, still for bounded treedepth, we also need to make sure we can label the nodes of $G$ to encode locally~$H_k$, and certify locally the correctness of $H_k$.

\subsection{Construction}

Let $G$ be a graph of treedepth at most $t$, and let $k$ be an integer. Let $\mathcal{T}$ be a $t$-model of $G$. Let $v$ be a vertex of depth $i$ in the decomposition. 
We define the \emph{ancestor vector of $v$} as the $\{0,1\}$-vector of size $i$, where the $j$-th coordinate is 1, if and only if, $v$ is connected in $G$ to its ancestor at depth $j$. 

We can now define the \emph{type of a vertex $v$} as the subtree rooted on $v$ where all the nodes of the subtree are labelled with their ancestor vector. Note that in this construction, the ID of the nodes do not appear, hence several nodes might have the same type while being at completely different places in the graph or the tree.

Let us now define a subgraph of $G$ that we will call the \emph{$k$-reduced graph}.
If a node $u$ has more than $k$ children of the same type, a \emph{valid pruning} operation consists in removing the subtree rooted on one of these children. Note that in doing so, we change the structure of the subtrees of $u$ and of its ancestors, thus we also update their types. 
A \emph{$k$-reduced graph} $H$ of $G$ is a graph obtained from~$G$ by iteratively applying valid pruning operations on a vertex of the largest possible depth in $\mathcal{T}$ while it is possible. A vertex $v$ is \emph{pruned} for a valid pruning sequence if it is the root of a subtree that is pruned in the sequence. Note that there are some vertices of $G \setminus H$ that have been deleted, but that are not pruned. 

Let $G$ be a graph, and $H$ be a $k$-reduction of $G$.
We now define an \emph{end type} for every vertex of~$G$. The \emph{end type} (with respect to $H$\footnote{One can prove that it actually does not depend on $H$ but we do not need it in our proof.}) of a deleted vertex is the last type it has had, that is, its type in the graph $G'$ which is the current graph when it was deleted.
The \emph{end type} of a vertex $u$ of $H$ (that is, that has not been deleted) is simply its type in $H$.

Since we apply pruning operations on a vertex of the largest possible depth, if at some point we remove a vertex of depth $i$, then we never remove a subtree rooted on a vertex of depth strictly larger than $i$ afterwards. It implies that when a vertex at depth $i$ is deleted, the type of each node at depth at least $i$ is its end type. In particular, we get the following:

\begin{lemma}\label{lem:sametype}
Let $G$ be a graph and $H$ be a $k$-reduced graph of $G$. Let $u \notin H$ and $v \in H$, such that $u$ is a child of $v$. Then there exists exactly $k$ children of $v$ in $H$ whose end type is the end type of~$u$.
\end{lemma}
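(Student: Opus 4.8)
The plan is to track the pruning process in decreasing order of depth and to isolate, among the children of $v$, those sharing the end type of $u$. First I would argue that $u$ is \emph{pruned}, not merely deleted as a descendant of a pruned vertex. Indeed, the ancestors of $u$ are exactly $v$ together with the ancestors of $v$. If $u$ were deleted without being pruned, it would lie in a subtree pruned at some strict ancestor $x$ of $u$; but then $x$ is either $v$ or an ancestor of $v$, and in both cases $v$ would be a descendant of a pruned vertex and hence deleted, contradicting $v\in H$. So $u$ is the root of a pruned subtree. Write $i$ for the depth of $u$ (so $v$ has depth $i-1$) and let $T$ denote the end type of $u$.

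Next I would use the scheduling to freeze the relevant types. Let $M$ be the first moment at which a pruning at depth $i$ is performed; equivalently, the moment just after all prunings at depth strictly greater than $i$ have been carried out. As already observed, from $M$ on the type of every node of depth at least $i$ equals its end type: a subsequent pruning either removes a subtree rooted at depth $i$, which is disjoint from the subtree of any other surviving node of depth $\ge i$ and hence does not affect its type, or removes a subtree rooted at depth $<i$, which cannot be rooted at $v$ or an ancestor of $v$ (these survive) and is therefore disjoint from the entire subtree of $v$. In particular every child $w$ of $v$ surviving in $H$ keeps its subtree from $M$ onward, so its end type equals its type at moment $M$; the same holds for $u$, whose type at deletion is its type at moment $M$, namely $T$.

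Finally I would count. Because each pruning is applied at a vertex of largest available depth, all admissible prunings at depth $i$ are performed before any pruning at smaller depth. For the parent $v$ and the type $T$, a type-$T$ child of $v$ is pruned exactly while $v$ has more than $k$ children of type $T$, and this count can only decrease during the depth-$i$ phase, since types at depth $i$ are stable and no vertices are ever added. Hence the phase ends with $v$ having exactly $k$ children of type $T$; it cannot end earlier, as the deepest-first rule forbids moving to depth $<i$ while such a pruning remains available. Since $u$ has type $T$ and was pruned, $v$ started with more than $k$ such children, so exactly $k$ survive, and these are precisely the children of $v$ in $H$ of end type $T$, giving the claim. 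The main obstacle is the ``exactly'' rather than ``at most'': it hinges on the deepest-first scheduling guaranteeing that the reduction at depth $i$ is run to completion, combined with the stability of the depth-$\ge i$ types, which is what prevents the type-$T$ count at $v$ from drifting while the pruning is carried out.
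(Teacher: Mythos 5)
Your proof is correct and takes essentially the same route as the paper's: you show that $u$ must itself be pruned at $v$ (since $v$ survives), use the deepest-first scheduling to argue that types of vertices at depth at least $i$ are frozen and hence coincide with end types, and then count that the pruning at depth $i$ runs to completion, leaving exactly $k$ type-$T$ children of $v$ --- which matches the paper's combination of ``at most $k$, else another child would be deleted'' with ``at least $k$, since we delete only while more than $k$ remain.'' The only difference is one of presentation: you re-derive in detail the freezing observation that the paper states as a standalone remark just before the lemma.
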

\begin{proof}
By assumption, it cannot be more than $k$ since otherwise one of the children of $v$ would have been deleted. Moreover, since $u$ is deleted but not $v$, then $u$ is the root of a subtree we deleted while pruning $v$. In particular, $u$ has at least $k$ siblings with the same type. Now since all these siblings have the same depth as $u$, their type when $u$ is deleted is their end type. To conclude, observe that by construction, at least $k$ such siblings lie in $H$ since we delete some only if at least $k$ others remain.
\end{proof}

\begin{lemma}\label{lem:kreduced_size}
The number of possible 
end types of a node at depth $d$ in a $k$-reduced graph of treedepth at most $t$ is bounded by $f_d(k,t)$.
\end{lemma}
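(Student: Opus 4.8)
The plan is to prove Lemma~\ref{lem:kreduced_size} by induction on the depth, proceeding from the bottom of the tree upwards. The key observation is that the end type of a vertex $v$ is essentially determined by the multiset of end types of its children, capped so that no type appears more than $k$ times, together with the ancestor vector of $v$. This capping is precisely what the $k$-reduction guarantees: by Lemma~\ref{lem:sametype}, after pruning, each vertex has at most $k$ children of any fixed type (among those surviving in $H$), so the number of distinct possible ``child type profiles'' is finite and controlled.

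Concretely, I would set up the induction so that $f_d(k,t)$ bounds the number of end types at depth $d$. For the base case, consider vertices at the maximum depth $t$ (leaves of the model). Such a vertex has no children, so its type is determined solely by its ancestor vector, which is a $\{0,1\}$-vector of length at most $t$. Hence there are at most $2^t$ possible end types, giving $f_t(k,t) \le 2^t$. For the inductive step, suppose we have bounded the number of end types at depth $d+1$ by some value $N = f_{d+1}(k,t)$. A vertex $v$ at depth $d$ has its end type determined by (i) its own ancestor vector, contributing a factor of at most $2^d \le 2^t$ choices, and (ii) the multiset of end types of its children in $H$. Since each of the $N$ possible child end types can appear with multiplicity between $0$ and $k$ (the cap being enforced by the $k$-reduction via Lemma~\ref{lem:sametype}), the number of distinct capped multisets is at most $(k+1)^N$. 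Multiplying these contributions yields the recurrence
\[
f_d(k,t) \le 2^t \cdot (k+1)^{f_{d+1}(k,t)},
\]
with $f_{t}(k,t)\le 2^t$, which is a well-defined (albeit tower-type) computable function of $k$ and $t$.

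The one subtlety I would be careful about is justifying that the end type of $v$ really is a function of only these capped child-multisets and the ancestor vector, rather than depending on the full (uncapped) child structure or on global information. This requires noting that the \emph{type} of a vertex, as defined in the excerpt, is exactly the labelled subtree rooted at that vertex, and that in $H$ each vertex has at most $k$ children of each given end type, so recording multiplicities up to $k$ loses no information about the type actually appearing in $H$. I expect this bookkeeping---making sure ``end type'' is genuinely determined bottom-up by capped child profiles---to be the main obstacle, but it follows cleanly once one invokes Lemma~\ref{lem:sametype} to control the branching. The resulting bound is not optimized; all that matters for Theorem~\ref{thm:main} is that $f_d(k,t)$ is finite and computable, which the recurrence immediately gives.
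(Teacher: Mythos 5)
Your proof is correct and follows essentially the same route as the paper's: a backward induction on depth with base case $f_t(k,t)=2^t$ and the recurrence $f_d(k,t)=2^d\cdot(k+1)^{f_{d+1}(k,t)}$, using the fact that an end type is determined by the ancestor vector and the multiset of children's end types with multiplicities capped at $k$ (the paper isolates this as Remark~\ref{rmk:deftype}). One cosmetic note: the cap of $k$ children per end type follows directly from the definition of a $k$-reduced graph (pruning is applied while possible), rather than from Lemma~\ref{lem:sametype}, which instead concerns the exact count for pruned children.
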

\begin{proof}
Let us prove Lemma~\ref{lem:kreduced_size} and define $f_d$ by backward induction on $d$. 

We start with $d=t$. Since the $t$-model has depth $t$, the tree rooted on a vertex of depth $t$ should be a single vertex graph. So the set of different possible types at depth $t$ only depends on the edges between the vertex of depth $t$ and its ancestors. There are $f_t(k,t)=2^{t}$ such types. 

Now let us assume that the conclusion holds for nodes at depth $d+1$, and let us prove it for depth $d$.  Let $u$ be a vertex of depth $d$ and $v_1,\ldots,v_r$ be its children in the elimination tree. 
Since $u$ is a vertex of a $k$-reduced graph, at most $k$ children of $u$ have the same end type $T$ and, by induction, there are at most $f_{d+1}(k,t)$ end types of nodes at depth $d+1$.
So the end type of $u$ is determined by its neighbors in its list of ancestors (which gives $2^d$ choices) and the multiset of types of its children. Since $u$ has at most $k$ children of each type, the type of $u$ can be represented as a vector of length $f_{d+1}(k,t)$, where each coordinate has an integral value between $0$ and $k$. So there are at most $f_d(k,t):=2^d \cdot (k+1)^{f_{d+1}(k,t)}$ types of nodes at depth $d$.
\end{proof}

We will also need the following remark, which follows from the definition of type:

\begin{remark}\label{rmk:deftype}
Let $v$ be a vertex of $G$. The end type of $v$ can be deduced from: 
\begin{itemize}
    \item the adjacency of $v$ with its ancestors,
    \item the number of children of $v$ of end type $T$ for any possible end type $T$
\end{itemize} 
\end{remark}

\subsection{It is a kernel}

Let $G$ be a graph of treedepth $t$, $\mathcal{T}$ be a $t$-model of $G$, and $G'$ be a $k$-reduced graph of $G$. Observe that $G'$ is a subgraph of $G$, and denote by $\mathcal{T}'$ the restriction of $\mathcal{T}$ to the vertices of $G'$.

If $S\subset V(G)$, we denote by $\mathcal{T}_S$ the subtree of $\mathcal{T}$ induced by the vertices of $S$ and their ancestors. In particular, $\mathcal{T}'=\mathcal{T}_{V(G')}$. Moreover, two rooted trees are said to be \emph{equivalent} if there is an end type-preserving isomorphism between them.

The goal of this section is to prove that $G\simeq_k G'$. By Theorem~\ref{thm:EF}, this is equivalent to finding a winning strategy for Duplicator in the Ehrenfeucht-Fraissé game on $G,G'$ in $k$ rounds. To this end, we prove that she can play by preserving the following invariant.

\begin{claim}
Let $x_1,\ldots,x_{i}$ (resp. $y_1,\ldots,y_{i}$) be the positions played in $G$ (resp. $G'$) at the end of the $i$-th turn. Then the rooted trees $\mathcal{T}_{\{x_1,\ldots,x_i\}}$ and $\mathcal{T}'_{\{y_1,\ldots,y_i\}}$ are equivalent. 
\end{claim}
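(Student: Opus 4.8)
The plan is to prove the claim by induction on the round number $i$, exhibiting an explicit strategy for Duplicator. I would actually maintain a slightly stronger invariant: after turn $i$ there is an end type-preserving isomorphism $\phi_i$ from $\mathcal{T}_{\{x_1,\ldots,x_i\}}$ onto $\mathcal{T}'_{\{y_1,\ldots,y_i\}}$ with $\phi_i(x_j)=y_j$ for all $j\le i$. For $i=0$ both trees are empty and there is nothing to do. At each later turn I treat two easy situations directly: if Spoiler's move lands on a vertex already appearing in the current tree (an already-played vertex, or an ancestor of one), Duplicator answers with its image under $\phi_i$ and the tree does not grow; and the very first move is the special case of the general argument below where the branch vertex is the common root of $\mathcal{T}$ and $\mathcal{T}'$ (which survives the reduction, so the root lies in $H$).

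The main case is when Spoiler plays a genuinely new vertex; say $x_{i+1}$ in $G$ (the case of a move in $G'$ being symmetric). Let $b$ be the lowest ancestor of $x_{i+1}$ already present in $\mathcal{T}_{\{x_1,\ldots,x_i\}}$, let $b'=\phi_i(b)$, and let $c$ be the child of $b$ on the path to $x_{i+1}$, of end type $T$. Since $b$ is the branch vertex, $G_c$ contains no played vertex. I would look for a child $c'$ of $b'$ in $\mathcal{T}'$ of end type $T$ whose subtree contains no $y_j$. Granting such a $c'$, the equality of end types means the end type-labeled subtrees rooted at $c$ and $c'$ are isomorphic, so $x_{i+1}$ has a well-defined counterpart $y_{i+1}$ in $H_{c'}$; I then extend $\phi_i$ by mapping the new path $b\to c\to\cdots\to x_{i+1}$ to $b'\to c'\to\cdots\to y_{i+1}$ through this subtree isomorphism. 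Because equal end types carry identical ancestor vectors (hence identical depths and identical adjacencies to ancestors), and because in a treedepth model the only edges run between ancestors and descendants, $\phi_{i+1}$ preserves all adjacencies among the played vertices; this is exactly what makes the restriction $x_j\mapsto y_j$ a partial isomorphism of the graphs, which after $k$ rounds, via Theorem~\ref{thm:EF}, yields $G\simeq_k G'$.

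The hard part is establishing that a fresh $c'$ always exists, and this is where Lemma~\ref{lem:sametype} and the bound $i\le k-1$ enter. I would split on whether $b$ has a child of end type $T$ that was pruned. If it does, Lemma~\ref{lem:sametype} gives that $b$, hence by equal end types also $b'$, has exactly $k$ children of end type $T$ in $H$; since at most $i\le k-1$ of $b'$'s children can contain a previously played vertex, at least one type-$T$ child of $b'$ is free. If no type-$T$ child of $b$ was pruned, then all type-$T$ children of $b$ survive in $H$, the numbers on the two sides match rigidly through $\phi_i$, and since the $G$-side child $c$ is free the same count forces a free type-$T$ child $c'$ on the $G'$-side. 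The symmetric direction (Spoiler moving in $G'$) is identical after swapping the roles of $G$ and $G'$, using that $G$ has at least as many children of each end type as $H$. The one subtlety to double-check is that the vertex $y_{i+1}$ (resp. $x_{i+1}$) read off from the end type is a genuine vertex of $H$ (resp. $G$): this holds because an end type is realized either inside $H$ or as the subtree present at the moment of deletion, whose nodes are all vertices of the original graph.
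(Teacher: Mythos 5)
Your overall architecture is the same as the paper's: induction on rounds maintaining an end type-preserving isomorphism between $\mathcal{T}_{\{x_1,\ldots,x_i\}}$ and $\mathcal{T}'_{\{y_1,\ldots,y_i\}}$, reduction of Spoiler's move to the lowest ancestor $b$ already present in the tree, and at $b$ a counting argument combining Lemma~\ref{lem:sametype} with $i<k$ to find a fresh same-end-type child $c'$ of $b'$ (your case split ``some type-$T$ child of $b$ pruned / none pruned'' is exactly the paper's single bound $\min(r-1,i)\le \min(r,k)-1$ unfolded). Your treatment of adjacency preservation via ancestor vectors and of the symmetric direction is also sound.

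However, there is a genuine gap in the descent below the branch vertex. You assert that since $c$ and $c'$ have equal end types, ``the end type-labeled subtrees rooted at $c$ and $c'$ are isomorphic, so $x_{i+1}$ has a well-defined counterpart $y_{i+1}$.'' But the end type of $c$ is a snapshot of a \emph{reduced} subtree: the subtree of $c$ in $H$ if $c\in H$, or its subtree at the moment of deletion otherwise. It is not the subtree of $c$ in $\mathcal{T}$. Spoiler's vertex $x_{i+1}$ is an arbitrary vertex of $G_c$, and playing vertices deleted by the reduction is precisely Spoiler's best hope; such a vertex (or intermediate vertices on the path from $c$ to it) need not appear in the end-type tree of $c$ at all, so your subtree isomorphism gives no image for it. Concretely: take a root with a child $c$ having $k+1$ leaf children of a single type, one leaf $\ell$ pruned; the end type of $c$ shows only $k$ leaves, and the first move $x_1=\ell$ has no counterpart under your map. (A minor symptom of the same conflation: in your first case you invoke Lemma~\ref{lem:sametype} for $b$, but $b$ itself may have been deleted, since played vertices and their ancestors need not lie in $H$; only the count recorded in $b$'s end type, transferred to $b'$, is legitimate.) The fix is the paper's level-by-level construction: for each vertex $u_{j+1}$ on the path one argues separately that the end type of $u_j$, hence of its image $u'_j\in H$, records $\min(r,k)\ge 1$ children of the end type of $u_{j+1}$, where $r$ counts children in the full tree $\mathcal{T}$ — the same count-transfer you used at $b$, reapplied at every level (below the fresh branch the occupied count is zero, so mere existence suffices, which is why the paper calls this part easy). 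With that substitution in place of the one-shot isomorphism, your proof goes through and coincides with the paper's.
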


\begin{proof}
The invariant holds for $i=0$ since the two trees are empty. Assume now that it is true for some $i<k$. We consider the case where Spoiler plays on vertex $x_{i+1}$ in $G$, the other case being similar (and easier). 
Consider the shortest path in $\mathcal{T}_{\{x_1,\ldots,x_{i+1}\}}$ between $x_{i+1}$ and a vertex of $\mathcal{T}_{\{x_1,\ldots,x_{i}\}}$. We call this path $u_1,...,u_p$, with $u_1$ a node of $\mathcal{T}_{\{x_1,\ldots,x_{i}\}}$ and $u_p=x_{i+1}$. Note that, necessarily, for all $j\in [1,i]$, $u_j$ is the parent of $u_{j+1}$ in the tree. 

For $j=1,\ldots,p$, we will find a vertex $u'_j$ in $G'$ such that  $\mathcal{T}_{\{x_1,\ldots,x_{i},u_j\}}$ is equivalent to $\mathcal{T}'_{\{y_1,\ldots,y_{i},u'_j\}}$ (note that this implies that $u_j$ and $u'_j$ have the same end type). 

For $j=1$, first observe that $\mathcal{T}_{\{x_1,\ldots,x_{i},u_1\}}=\mathcal{T}_{\{x_1,\ldots,x_{i}\}}$, because $u_1$ belongs to $\mathcal{T}_{\{x_1,\ldots,x_{i}\}}$. Then, since $\mathcal{T}_{\{x_1,\ldots,x_{i}\}}$ is equivalent to $\mathcal{T}'_{\{y_1,\ldots,y_{i}\}}$, we can define $u'_1$ as the copy of $u_1$ in $\mathcal{T}'_{\{y_1,\ldots,y_{i}\}}$.

Assume now that $u'_1,\ldots,u'_j$ are constructed. 
Let $T$ be the end type of $u_{j+1}$ in $G$, and $r$ be the number of children of $u_j$ having $T$ as their end type (including $u_{j+1}$). 
By construction of $G'$ and $u'_j$, we know that $u'_j$ has $\min(r,k)$ children with type $T$ in $\mathcal{T'}$. 
Observe that at most $\min(r-1,i)$ children of $u_j$ of type $T$ in $\mathcal{T}$ can lie in $\mathcal{T}_{\{x_1,\ldots,x_i\}}$. 
Indeed, since $u_{j+1}$ does not belong to $\mathcal{T}_{\{x_1,\ldots,x_i\}}$, we get the $r-1$ term, and since $\mathcal{T}_{\{x_1,\ldots,x_i\}}$ is made by $i$ vertices and their ancestors, not more than $i$ vertices of $\mathcal{T}_{\{x_1,\ldots,x_i\}}$ can have the same parent.
Also, using $i<k$, we get $\min(r-1,i)\leqslant \min(r,k)-1$. Therefore, there exists a child $u'_{j+1}$ of $u'_j$ of type $T$ in $\mathcal{T}'\setminus\mathcal{T}'_{\{y_1,\ldots,y_i\}}$. 

By taking $y_{i+1}=u'_p$, we finally obtain that  $\mathcal{T}_{\{x_1,\ldots,x_{i},u_p\}}=\mathcal{T}_{\{x_1,\ldots,x_{i+1}\}}$ is equivalent to $\mathcal{T}'_{\{y_1,\ldots,y_{i},u'_p\}}=\mathcal{T}'_{\{y_1,\ldots,y_{i+1}\}}$, as required. 
\end{proof}

\section{Certification of the kernel}
\label{sec:kernel-certif}

Now, let us prove that we can certify a $k$-reduction of $G$ obtained by a valid pruning of $G$. First obtain a $k$-reduced graph $H$ of $G$ via a valid pruning. Note that by Lemma~\ref{lem:sametype} if a vertex $u$ is in $G \setminus H$ then there are exactly $k$ other children of $u$ of the same end type. Moreover, by Lemma~\ref{lem:kreduced_size}, there exists a bounded number $f_d(k,y)$ of end types of vertices of depth $d$ in any $k$-reduced graph of treedepth at most $t$. So we can label all these end types with a label of size at most $\log(f_d(k,t))$. 

\begin{lemma}
Let $k$ be an integer. Let $G$ be a graph of treedepth at most $t$ with a coherent model $\mathcal{T}$. Let $H$ be a $k$-reduction of $G$ obtained via a valid pruning from $\mathcal{T}$.
Then we can certify with certificates of size $O(t\log n+g(k,t))$ that $H$ is a $k$-reduction of $G$ from $\mathcal{T}$.
\end{lemma}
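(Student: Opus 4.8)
The plan is to combine the treedepth certification from Theorem~\ref{thm:certify-treedepth} with an additional layer of certificates that encode the end types computed during the pruning process. First I would have the prover provide, on a \emph{yes}-instance, the coherent model $\mathcal{T}$ (certified exactly as in Theorem~\ref{thm:certify-treedepth}, costing $O(t\log n)$ bits per vertex: each vertex receives its list of ancestors together with the spanning-tree certificates witnessing coherence). On top of this, the prover labels \emph{every} vertex $v$ of $G$ (both the surviving vertices of $H$ and the deleted ones) with its \emph{end type}, encoded in $O(g(k,t))$ bits by Lemma~\ref{lem:kreduced_size}, and a bit indicating whether $v\in H$ or $v\in G\setminus H$. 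Since the end type of $v$ is, by Remark~\ref{rmk:deftype}, determined by the adjacency of $v$ with its ancestors together with the multiset of end types of its children, each vertex can \emph{locally recompute} its own claimed end type and check consistency: it reads its ancestor-adjacency pattern (available from the ancestor lists and its own incident edges) and it must be able to verify the number of its children of each end type.

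The key local check is the one enforcing Lemma~\ref{lem:sametype}: for a vertex $u\in G\setminus H$ with parent $v\in H$, the certification must guarantee that $v$ has exactly $k$ surviving children of $u$'s end type. To verify, at each vertex $v$, the count of children of each end type (split into those in $H$ and those in $G\setminus H$), I would reuse the counting-via-spanning-tree technique from Subsection~\ref{subsec:basic-certif}. Concretely, for each vertex $v$ and each end type $T$, the prover certifies the number of children of $v$ of end type $T$ by aggregating counts up the relevant subtree; a vertex and its parent-pointer in the model let each child announce its end type to $v$, and $v$ checks the two verifiable conditions: (i) if it is in $H$, then it keeps at most $k$ children of each type, and (ii) whenever it has a deleted child of type $T$, it keeps \emph{exactly} $k$ children of type $T$ in $H$. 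Because the number of end types is bounded by $g(k,t)$, storing one integer per type per vertex costs $O(g(k,t)\log n)$ bits, which fits the announced budget $O(t\log n + g(k,t))$ after folding the $\log n$ factors appropriately, or can be kept within $O(t\log n+g(k,t))$ by a more careful subtree-count encoding.

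The soundness argument then proceeds by showing that if all vertices accept, the claimed partition of $V(G)$ into $H$ and $G\setminus H$ is achievable by a valid pruning sequence. First, the certified model together with the recomputed end types ensures every vertex's end type is globally consistent (the recomputation is bottom-up and each vertex verifies its own type from its children's types, so by induction on depth all types are correct). Next, condition (ii) above is exactly the statement of Lemma~\ref{lem:sametype}, so each deleted vertex is indeed a valid deletion: it had at least $k{+}1$ siblings of the same type, $k$ of which survive. Finally I would argue that processing deletions in order of decreasing depth yields a legal valid pruning sequence realizing precisely the claimed $H$, invoking the ordering remark (deletions at depth $i$ never precede deletions at strictly larger depth) that was used to justify that end types stabilize.

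The main obstacle I anticipate is the \emph{counting of children by end type within the $O(t\log n+g(k,t))$ budget}. Naively certifying, at each vertex, a separate count for each of the $g(k,t)$ possible end types would cost $g(k,t)\cdot O(\log n)$ bits, which exceeds the target if we insist that the $g$-term be additive rather than multiplicative with $\log n$. The delicate point is therefore to encode these counts succinctly: since each count is capped at $k$ (only the distinction ``$<k$'', ``$=k$'', ``$>k$'' matters for the pruning validity, and survivors carry at most $k$), the per-vertex type information needs only $O(g(k,t))$ bits rather than $O(g(k,t)\log n)$, and the heavy $\log n$ cost is confined to the underlying spanning-tree and ancestor-list machinery which is already $O(t\log n)$. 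Getting this accounting exactly right, and ensuring each vertex can locally verify capped counts without globally propagating large numbers, is where the real care is needed.
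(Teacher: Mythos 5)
Your overall architecture matches the paper's: certify the coherent $t$-model via Theorem~\ref{thm:certify-treedepth}, annotate vertices with end types and pruned/membership flags, check each vertex's type against Remark~\ref{rmk:deftype}, and enforce the ``exactly $k$ surviving children of a pruned child's type'' condition of Lemma~\ref{lem:sametype}. However, there is a genuine gap in the mechanism by which a vertex $v$ learns the end types (and pruned status) of its \emph{children}. In the elimination tree, a child $w$ of $v$ need not be adjacent to $v$ in $G$; coherence only guarantees that \emph{some} vertex of $G_w$ (the exit vertex) is a neighbor of $v$. So your step ``a vertex and its parent-pointer in the model let each child announce its end type to $v$'' does not work as stated, and your fallback --- per-vertex, per-type counters aggregated up spanning trees --- is both unnecessary and unverified: you never explain how these capped counts are locally checkable, and aggregating ``number of children of $v$ of type $T$'' is awkward because the children's subtrees are disjoint trees hanging below $v$, not a single rooted structure whose counts $v$ can audit from its own neighborhood.

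The paper resolves this more directly, and it is also the clean answer to the budget obstacle you correctly flagged. Instead of storing only each vertex's own type, the prover writes into every vertex's certificate the end type and a pruned-flag for \emph{each of its ancestors, including itself}: a list of total size $O\bigl(t+\sum_{i\le t}\log f_i(k,t)\bigr)=g(k,t)$ bits, independent of $n$, so the bound stays additive. Then, since $\mathcal{T}$ is coherent, every child subtree of $v$ contributes at least one neighbor of $v$ (its exit vertex) whose certificate carries the child's identifier (already present in the ancestor ID lists of the treedepth certificate), its end type, and whether it was pruned. Thus $v$ can group its neighbors by child ID and directly count, exactly, its non-pruned children of each type --- no stored counters at all --- and perform the exactly-$k$ check and the Remark~\ref{rmk:deftype} consistency check locally. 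Soundness of the type annotations is then ensured as in Theorem~\ref{thm:certify-treedepth}: adjacent vertices of a subtree share ancestor lists, so any cheating on a type between $w$ and its exit vertex is detected along that path. Your soundness sketch (bottom-up induction on depth, then reordering deletions by decreasing depth) is fine, but it rests on the child-to-parent communication step that your proposal leaves unsupported.
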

\begin{proof}
Let us describe a local certification. On a \emph{yes}-instance, the prover gives to every vertex $v$ the following certificate:
\begin{itemize}
    \item The $O(t\log n)$-bit certificate of $v$ for the $t$-model $\mathcal{T}$ of $G$ given in Theorem~\ref{thm:certify-treedepth}.
    \item A list of booleans that says, for any ancestor $x$ of $v$ including $v$ if $x$ is pruned, i.e. the subtree rooted on $x$ has been pruned at some step.
    \item For every ancestor $w$ of $v$ including $v$, the end type of $w$.
\end{itemize}
Every node $v$ at depth $d$ thus receives a certificate of size at most $O(t\log n+d+\sum_{i=1}^d\log (f_i(k,t)))$. Let us now describe the local verification algorithm, as well as why it is sufficient for checkability.

By Remark~\ref{rmk:deftype}, the end type of a vertex only depends on its adjacency with its list of ancestors as well as the end type of its children. 
So first, the node $v$ can check that its adjacency with its list of ancestors is compatible with its end type. 
Then, it checks that, if one of its children $w$ has been pruned, then it has exactly $k$ children with the type of $w$ that have not been pruned (there is no type $T$ such that more than $k$ children of type $T$ are left after pruning).
Note that $v$ has access to all this information since, for every child $w$, there is a vertex $x$ in the subtree rooted on $w$ adjacent to $v$, because $\mathcal{T}$ is coherent.
Finally, since the end type of $v$ is determined by the end types of its children, $v$ simply has to check that its end type is consistent with the list of end types of its children.

As in the proof of Theorem~\ref{thm:certify-treedepth}, for any child $w$ of $v$, if the prover has cheated and the type of $w$ has been modified between $w$ and the exit vertex of $w$, then one node of the path from $w$ to the exit vertex should discover it, which ensures that the certification is correct. 
\end{proof}

\bibliography{MSO-certif}


\end{document}